\documentclass[a4paper,twocolumn,11pt,accepted=2021-07-02]{quantumarticle}
\pdfoutput=1
\usepackage[utf8]{inputenc}
\usepackage[english]{babel}
\usepackage[table]{xcolor}
\usepackage[T1]{fontenc}
\usepackage{amsmath}
\usepackage{hyperref}
\usepackage{tikz}
\usepackage{lipsum}
\usepackage{graphicx}
\usepackage{epsfig}
\usepackage{epstopdf}
\usepackage{braket}
\usepackage{hyperref}
\usepackage{amsthm}
\usepackage{enumitem}
\usepackage{dsfont}
\usepackage{bbold}
\usepackage{bm}
\usepackage{eurosym}
\usepackage{diagbox}
\usepackage[numbers,sort&compress]{natbib}
\usepackage[ruled,vlined]{algorithm2e}

\newcommand{\be}{\begin{equation}}
\newcommand{\ee}{\end{equation}}
\newcommand{\ba}{\begin{aligned}}
\newcommand{\ea}{\end{aligned}}

\newcommand{\bc}{\begin{center}}
\newcommand{\ec}{\end{center}}
\newcommand{\beq}{\begin{equation}}
\newcommand{\eeq}{\end{equation}}
\newcommand{\beqq}{\begin{equation*}}
\newcommand{\eeqq}{\end{equation*}}
\newcommand{\beqa}{\begin{align}}
\newcommand{\eeqa}{\end{align}}
\newcommand{\barr}{\begin{array}}
\newcommand{\earr}{\end{array}}
\newcommand{\bi}{\begin{itemize}}
\newcommand{\ei}{\end{itemize}}

\newtheorem{lem}{Lemma}

\newtheorem{defi}{Definition}
\DeclareMathOperator{\poly}{poly\,}

\DeclareMathOperator{\Tr}{Tr}

\DeclareMathOperator{\Per}{Per}

\newcommand*{\vv}[1]{\vec{#1}}


\begin{document}

\title{Quantum machine learning with adaptive linear optics}

\author{Ulysse Chabaud}
\email{ulysse.chabaud@gmail.com}
\orcid{0000-0003-0135-9819}
\affiliation{Department of Computing and Mathematical Sciences, California Institute of Technology}
\affiliation{Universit\'e de Paris, IRIF, CNRS, France}
\affiliation{Laboratoire d'Informatique de Paris 6, CNRS, Sorbonne Universit\'e, 4 place Jussieu, 75005 Paris, France}
\author{Damian Markham}
\affiliation{Laboratoire d'Informatique de Paris 6, CNRS, Sorbonne Universit\'e, 4 place Jussieu, 75005 Paris, France}
\affiliation{JFLI, CNRS, National Institute of Informatics, University of Tokyo, Tokyo, Japan}
\author{Adel Sohbi}
\email{sohbi@kias.re.kr}
\affiliation{School of Computational Sciences, Korea Institute for Advanced Study, Seoul 02455, Korea}
\orcid{0000-0002-1275-9722}
\maketitle


\begin{abstract}
\noindent We study supervised learning algorithms in which a quantum device is used to perform a computational subroutine---either for prediction via probability estimation, or to compute a kernel via estimation of quantum states overlap. We design implementations of these quantum subroutines using Boson Sampling architectures in linear optics, supplemented by adaptive measurements. We then challenge these quantum algorithms by deriving classical simulation algorithms for the tasks of output probability estimation and overlap estimation. We obtain different classical simulability regimes for these two computational tasks in terms of the number of adaptive measurements and input photons. In both cases, our results set explicit limits to the range of parameters for which a quantum advantage can be envisaged with adaptive linear optics compared to classical machine learning algorithms: we show that the number of input photons and the number of adaptive measurements cannot be simultaneously small compared to the number of modes. Interestingly, our analysis leaves open the possibility of a near-term quantum advantage with a single adaptive measurement.
\end{abstract}


\section{Introduction}

Quantum computers promise dramatic advantages over their classical counterparts~\cite{feynman1982simulating,shor1994algorithms}, but a fault-tolerant universal quantum computer is still far from being available~\cite{preskill2018quantum}. The quest for near-term quantum speedup has thus led to the introduction of various subuniversal models---models that are believed to have an intermediate computational power between classical and universal quantum computing---such as Boson Sampling~\cite{Aaronson2013} or IQP circuits~\cite{Bremmer2010}, recently culminating with the experimental demonstration of random circuit sampling~\cite{arute2019quantum} and Gaussian Boson Sampling~\cite{zhong2020quantum}.

Finding practical applications for these subuniversal models, other than the demonstration of quantum speedup, is a timely issue, as it may enable interesting quantum advantages in the era of Noisy Intermediate-Scale Quantum devices~\cite{preskill2018quantum}. 

Recently, there has been an increased interest on the possibility of enhancing classical machine learning algorithms using quantum computers~\cite{harrow2009quantum}, which includes the development of quantum neural networks \cite{Schuld2014, Romero_2017, doi:10.1098/rspa.2017.0551, Dunjko_2018, Zoufal2019, PhysRevResearch.1.033063, farhi2018classification, Beer2020, abbas2020power, cerezo2020variational} and the development of quantum kernel methods \cite{havlivcek2019supervised, schuld2019quantum, Blank2020, Bartkiewicz2020, liu2020rigorous, huang2020power}. 

In particular, recent proposals have been driven by subuniversal models such as Gaussian Boson Sampling~\cite{hamilton2017gaussian,schuld2019quantum2} or IQP circuits~\cite{Bremmer2010,havlivcek2019supervised}. In the latter, the authors considered supervised learning algorithms in which some computational subroutines are executed in a quantum way, namely the estimation of the output probabilities of quantum circuits, or the estimation of the overlap of the output states of quantum circuits. They showed that IQP circuits alone could not provide a quantum advantage for these subroutines and therefore considered minimal extensions of these circuits, in terms of circuit depth.

Hereafter, we study the use of Boson Sampling linear optical interferometers~\cite{Aaronson2013}, with input photons, for similar quantum machine learning tasks. Instead of extending the depth which was shown to improve machine learning model performances \cite{havlivcek2019supervised} while making the training phase challenging \cite{McClean_2018,cerezo2020costfunctiondependent, wang2020noiseinduced}, we allow for adaptive measurements---intermediate measurements that drive the rest of the computation---which provide a natural analogy with the circuit depth in the linear optics picture~\cite{briegel2009measurement}: by encoding qubits into single-photons and using a sufficient number of adaptive measurements, one can perform universal quantum computing~\cite{knill2001scheme}. 

We give a detailed prescription for performing quantum machine learning with classical data, using adaptive linear optics for computational subroutines such as probability estimation and overlap estimation.

We also examine the classical simulability of these quantum subroutines. More precisely, we give classical simulation algorithms whose runtimes are explicitly dependent on: (i) the number of modes $m$, (ii) the number of adaptive measurements $k$, (iii) the number of input photons $n$ and (iv) the number of photons $r$ detected during the adaptive measurements. This effectively sets a limit on the range of parameters for which adaptive linear optics may provide an advantage for machine learning over classical computers using our methods, thus identifying the regimes where a quantum advantage can be envisaged. Boson Sampling instances~\cite{Aaronson2013} correspond to the case with no adaptive measurement, while the Knill--Laflamme--Milburn scheme for universal quantum computing~\cite{knill2001scheme} corresponds to the case where the number of adaptive measurements scales linearly with the size of the computation. 

For probability estimation, we show that the classical simulation is efficient whenever the number of adaptive measurements or the number of input photons is constant. Moreover, the number of input photons and the number of adaptive measurements cannot be simultaneously small compared to the number of modes (see Table~\ref{tab:probaest}).

For overlap estimation, we show a similar behaviour, although in this case our results do not rule out the possibility of a quantum advantage with a single adaptive measurement (see Tables~\ref{tab:overlapest} and~\ref{tab:overlapest2}). Our main technical contribution is an expression for the inner product of the output states of two adaptive unitary interferometers which is essentially independent of the number of adaptive measurements.

The rest of the paper is organised as follows.
In section~\ref{sec:background}, we provide a background on quantum machine learning with classical data and classical simulation of quantum computations. In section~\ref{sec:ALO}, we introduce the model of adaptive linear optics which we consider. We give a prescription for performing probability estimation and overlap estimation with instances of this model, and detail how to use these as subroutines for machine learning problems.
In section~\ref{sec:simulation}, we derive two classical simulation algorithms, one for each of these two tasks, and we analyse the running time of these algorithms. We conclude in section~\ref{sec:conclusion}.


\section{Background}
\label{sec:background}


\subsection{Kernel methods for quantum machine learning}


\subsubsection{Encoding classical data with quantum states}

We consider a typical machine learning problem, such as a classification problem, where a classical dataset $\mathcal{D}=\{\vv x_1,\dots,\vv x_{\vert \mathcal{D} \vert}\}$ from an input set $\mathcal{X}$ is given.
One way to use quantum computers to solve such problems is to encode classical data onto quantum states such that there exists a so-called feature map $\vv x_l \mapsto \vert \phi(\vv x_l)\rangle$ which can be processed by a quantum computer.

\begin{defi}[Feature map \cite{schuld2019quantum}]\label{def:fMap}
Let $\mathcal{F}$ be a Hilbert space, called feature Hilbert space, $\mathcal{X}$ a non empty set, called input set, and $\vv x\in\mathcal{X}$. A feature map is a map $\phi:\mathcal{X}\rightarrow\mathcal{F}$ from inputs to vectors in the Hilbert space.
\end{defi}

\noindent Many machine learning algorithms perform well in linear cases such as the support vector machines which will be used hereafter. However, many real world problems require non-linearity to make successful predictions. By using \textit{kernel methods} one can introduce non-linearity and use estimation methods that are linear in terms of the kernel evaluations.

\begin{defi}[Kernel \cite{schuld2019quantum}]\label{def:kernel}
Let $\mathcal{X}$ be an input set. A function $\kappa: \mathcal{X} \times \mathcal{X} \rightarrow \mathbb{C}$ is called a kernel if for any finite subset $\mathcal{D}=\{\vv x_1,\dots,x_{\vert \mathcal{D} \vert}\}$ with $M\geq2$ the Gram matrix $K$ with entries $K_{l,l'}=\kappa(\vv x_l,\vv x_{l'})$ is positive semidefinite.
\end{defi}

\noindent The kernel corresponds to a dot product in a feature space (here in a high-dimensional Hilbert space). In \cite{schuld2021supervised}, it is shown that the notions of feature map in Hilbert space and kernel can be connected. A straightforward way is to define a (quantum) kernel $K$ from a feature map $\phi$ as follows:

\begin{equation}\label{eq:fMapKernel}
\kappa(\vv x_l,\vv x_{l'}) = |\langle \phi(\vv x_l) \vert \phi(\vv x_{l'}) \rangle_\mathcal{F}|^2,
\end{equation}
where $x_m \in \mathcal{X}$, $x_{m'} \in \mathcal{X}$ and $\langle \cdot| \cdot\rangle_\mathcal{F}$ is the inner product over the Hilbert space $\mathcal{F}$.


\subsubsection{Using Feature Hilbert Spaces for Machine Learning}

There are two main ways to use feature Hilbert spaces:

\begin{itemize}
\item The \emph{quantum variational classification} \cite{havlivcek2019supervised} or \emph{explicit approach} \cite{schuld2019quantum}:
the entire model computation is preformed on a quantum device trained by a hybrid variational quantum-classical or classical algorithm \cite{PhysRevA.98.032309, farhi2018classification, PhysRevResearch.1.033063}. In this case, the probability distribution over the possible outcomes is used for the classification. Hence, while the feature map is used to encode the data, the kernel is not known directly from the quantum device. In \cite{havlivcek2019supervised}, the probability to obtain a certain binary output $b_i$ for the classical data input $x$ is given by:

\begin{equation}\label{eq:ibmprob}
	\Pr\,(b_i) = \vert \langle b_i \vert W(\theta) U_{\phi(x)} \vert 0^{\otimes n} \rangle \vert ^2,
\end{equation}
where $U_{\phi(x)}$ encodes the feature map and $W(\theta)$ corresponds to the quantum support vector machine.

\item The \emph{quantum kernel estimation} \cite{havlivcek2019supervised} or \emph{implicit approach} \cite{schuld2019quantum}: a quantum-assisted method where the quantum device is only used to evaluate the kernel and the rest of the machine learning algorithm is carried on by a classical algorithm. In \cite{havlivcek2019supervised}, the kernel between two classical data vectors $x_l$ and $x_{l'}$ is given by:
\begin{equation}\label{eq:ibmkernel}
	K_{l,l'} = |\langle 0^{\otimes n} \vert U_{\phi(x_l)}^\dagger U_{\phi(x_{l'})} \vert 0^{\otimes n} \rangle|^2.
\end{equation}
\end{itemize}

\noindent In order to challenge and justify the use of a quantum-assisted method, it is important to consider how hard it would be for a classical machine to compute the same quantities. Indeed, if the output probability or kernel could be estimated directly classically, there would be no need for a quantum computer for this task. Hence, when it comes to classical hardness, the first method requires that estimating the output probability in Eq.~\eqref{eq:ibmprob} is hard, while the second method requires that estimating the overlap in Eq.~\eqref{eq:ibmkernel} is hard (for classical computers). We give more formal definitions for these classical simulation tasks in the following section.


\subsection{Classical simulation of quantum computations}

Depending on the approach used for simulating classically the functioning of quantum devices, several notions of simulability are commonly used. One could ask the classical simulation algorithm to mimic the output of the quantum computation~\cite{terhal2002classical,pashayan2020estimation}: informally, a quantum computation is \textit{weakly simulable} if there exists a classical algorithm which outputs samples from its output probability distribution in time polynomial in the size of the quantum computation. Various relaxations of this definition are possible, allowing the classical sampling to be approximate rather than exact, or to abort with a small probability. The existence of such an efficient classical simulation has been ruled out for various subuniversal models of quantum computing, such as IQP circuits~\cite{Bremmer2010} or Boson Sampling~\cite{Aaronson2013}, under complexity-theoretic conjectures---both in the exact case and the approximate case, up to additional conjectures.

While weak simulation of quantum computations is arguably the most commonly studied, other notions of classical simulation may be useful: if the output samples of a quantum computation are used to compute a quantity which may be computed efficiently classically by other means, it is no longer necessary to simulate the whole quantum device. We consider two concrete examples which are prominent for variational quantum algorithms in quantum machine learning: probability estimation and overlap estimation~\cite{havlivcek2019supervised,schuld2019quantum}.

\begin{defi}[Probability estimation]
Let $P$ be a probability distribution over $M$ outcomes and let $\epsilon,\delta>0$. Given any outcome $\bm x$ in the sample space of $P$, \textit{probability estimation} refers to the computational task of outputting an estimate $\tilde P[\bm x]$ such that
\be
\left|\tilde P[\bm x]-P[\bm x]\right|\le\epsilon,
\ee
with probability greater than $1-\delta$, in time $O(\poly(\frac M\epsilon)\log\frac1\delta)$.
\end{defi}

\noindent Efficient probability estimation thus amounts to outputting an estimate of the probability of a fixed outcome with polynomially small additive error, with exponentially small probability of failure, in polynomial time in the size of the computation. One may use the samples from a quantum computation in order to perform efficiently probability estimation for any given outcome: given a quantum device of size $M$ which outputs samples from some probability distribution and a fixed outcome $\bm x$ in the sample space, one may run the device $\poly(M)$ times, recording the value $1$ whenever the outcome $\bm x$ is obtained and the value $0$ otherwise. Then, the frequency of the outcome $\bm x$ over the $\poly(M)$ uses of the quantum device is a polynomially precise additive estimate of the probability of the outcome $\bm x$ with exponentially small probability of failure, by virtue of Hoeffding inequality \cite{hoeffding1963probability}.

Weak simulation is at least as hard as probability estimation, since by the previous reasoning one may obtain polynomially precise additive estimates of probabilities from samples of the probability distribution. Moreover, there are some quantum computations for which weak simulation is hard for classical computers (assuming widely believed conjectures from complexity theory), but probability estimation can be done efficiently classically. This is the case for IQP circuits~\cite{Bremmer2010,havlivcek2019supervised}, Boson Sampling interferometers~\cite{Aaronson2013} and even the circuit implementing the period-finding subroutine of Shor's factoring algorithm~\cite{shor1994algorithms}. We detail the latter case in Appendix~\ref{app:Shor}, for the sake of clarifying the relations between these different notions of classical simulation. Note also that computing exactly output probabilities of quantum circuits or interferometers is a \#\textsf{P}-hard problem, therefore expected  to be hard classically and quantumly~\cite{van2010classical}.

A more general computational task than probability estimation in the context of quantum computing is the following:

\begin{defi}[Overlap estimation]
Let $\ket\phi$ and $\ket\psi$ be output states of two efficiently describable quantum computations of size $M$ and let $\epsilon,\delta>0$. \textit{Overlap estimation} refers to the computational task of outputting an estimate $\tilde O$ such that
\be
\left|\tilde O-|\braket{\phi|\psi}|^2\right|\le\epsilon,
\ee
with probability greater than $1-\delta$, in time $O(\poly(\frac M\epsilon)\log\frac1\delta)$.
\end{defi}

\noindent The overlap between two quantum states is a measure of their distinguishability~\cite{dieks1988overlap} and overlap estimation thus is related to quantum state discrimination. Several techniques exist to perform quantumly the overlap estimation of two states $\ket\phi$ and $\ket\psi$~\cite{fanizza2020beyond}. One of them is to perform the swap test~\cite{buhrman2001quantum} with various copies of both states.

Overlap estimation can be also done efficiently classically for IQP circuits. Nonetheless, this family of circuits has been identified as a promising venue for implementing quantum machine learning algorithms~\cite{havlivcek2019supervised}, when enlarged to contain similar circuits with bigger depth. Motivated by this approach, we consider hereafter the case of another subuniversal model: Boson Sampling~\cite{Aaronson2013} with input photons, supplemented with adaptive measurements.


\section{Quantum machine learning with adaptive linear optics}
\label{sec:ALO}

In what follows, we study Boson Sampling architectures~\cite{Aaronson2013}, supplemented with a given number of adaptive measurements---that is, some of the modes are measured throughout the computation and the rest of the computation can depend on their outcome---which we refer to as adaptive linear optics. In this section, we derive quantum algorithms for performing probability and overlap estimation with adaptive linear optics, together with a prescription for utilising these algorithms as subroutines in supervised learning algorithms.


\subsection{Adaptive linear optics}
\label{sec:ALO1}

Hereafter, we detail the computational model of adaptive linear optics which we consider. We first introduce a few notations.

In the linear optics picture, the input states we consider are multimode photon number Fock states over $m$ modes (we use bold math for multi-index notations, see Table~\ref{tab:bold}):
\be
\ket{\bm s}=\frac1{\sqrt{\bm s!}}\hat a_1^{\dag s_1}\dots\hat a_m^{\dag s_m}\ket0^{\otimes m},
\label{Fockm}
\ee
where $s_i$ and $\hat a_i^\dag$ are respectively the number of photons and the creation operator for the $i^{th}$ mode~\cite{Leonhardt-essential}. We identify these states with $m$-tuples of integers $\bm s=(s_1,\dots,s_m)\in\mathbb N^m$. Following~\cite{Aaronson2013}, let us define, for all $n\in\mathbb N$,
\be
\Phi_{m,n}:=\{\bm s=(s_1,\dots,s_m)\in\mathbb N^m\;|\;|\bm s|=n\}.
\ee
This set corresponds to the $m$-mode Fock states with total number of photons equal to $n$, and we have $|\Phi_{m,n}|=\binom{m+n-1}{n}$.

\begin{table}
\centering
\setlength\tabcolsep{0.1pt}
\bgroup
\def\arraystretch{2}
\begin{tabular}{| c | c |}
\hline
$|\bm s|$&$s_1+\cdots+s_m$\\
$\bm s!$&$s_1!\cdots s_m!$\\
$\ket{\bm s}$&$\ket{s_1\dots s_m}$\\
$\quad\bm s+\bm t\quad$&$\quad(s_1+t_1,\dots,s_m+t_m)\quad$\\
$\bm0^m$&$(0,\dots,0)$\\
$\bm1^m$&$(1,\dots,1)$\\
\hline
\end{tabular}
\egroup
\caption{Multi-index notations. For $m\in\mathbb N^*$, we write $\bm s=(s_1,\dots,s_m)\in\mathbb N^m$ and $\bm t=(t_1,\dots,t_m)\in\mathbb N^m$.}
\label{tab:bold}
\end{table}

We consider a unitary interferometer of size $m$, described by an $m\times m$ unitary matrix $U=(u_{ij})_{1\le i,j\le m}$. Unlike in the circuit picture, the matrix $U$ does not act on the computational basis, which is in this case the infinite multimode Fock basis, but rather describes the linear evolution of the creation operator of each mode. More precisely,
\be
\begin{pmatrix} \hat a_1^\dag \\ \vdots\\ \hat a_m^\dag \end{pmatrix}\mapsto U\begin{pmatrix} \hat a_1^\dag \\ \vdots\\ \hat a_m^\dag\end{pmatrix}=\begin{pmatrix} \sum_{k=1}^m{u_{1k}\hat a_k^\dag} \\ \vdots\\ \sum_{k=1}^m{u_{mk}\hat a_m^\dag}\end{pmatrix}.
\label{evocrea}
\ee
We write $\hat U$ instead the unitary action of the interferometer on the multimode Fock basis. Because the interferometer conserves the total number of photons, for all $p,q\in\mathbb N$, all $\bm s\in\Phi_{m,p}$ and all $\bm t\in\Phi_{m,q}$
\be
\braket{\bm s|\hat U|\bm t}=0
\ee
whenever $p\neq q$.
Let $n\in\mathbb{N}$, $\bm s=(s_1,\dots,s_m)\in\Phi_{m,n}$ and $\bm t=(t_1,\dots,t_m)\in\Phi_{m,n}$. Combining Eq.~\eqref{Fockm} and Eq.~\eqref{evocrea} we obtain~\cite{Aaronson2013}
\be
\braket{\bm s|\hat U|\bm t}=\frac{\Per(U_{\bm s,\bm t})}{\sqrt{\bm s!}\sqrt{\bm t!}},
\label{transf}
\ee
where $U_{\bm s,\bm t}$ is the $n\times n$ matrix obtained from $U$ by repeating $s_i$ times its $i^{th}$ row and $t_j$ times its $j^{th}$ column for $i,j=1,\dots,m$, and where the permanent of a $r\times r$ matrix $A=(a_{ij})_{1\le i,j\le r}$ is defined as
\be
\Per A=\sum_{\sigma\in\mathcal{S}_r}{\prod_{i=1}^r{a_{i\sigma(i)}}},
\label{per}
\ee
where $\mathcal{S}_r$ is the symmetric group over $\{1,\dots,r\}$.

\begin{figure*}[t]
	\begin{center}
		\includegraphics[width=1.9\columnwidth]{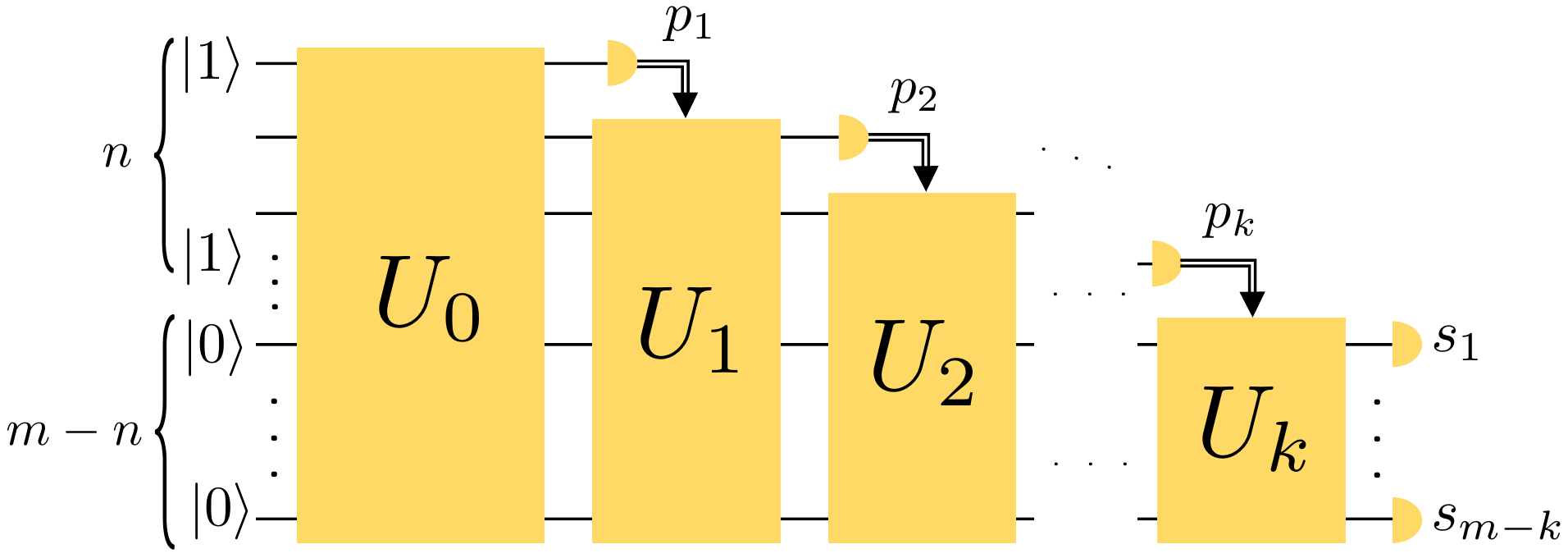}
		\caption{Linear optical computing model with $k$ adaptive measurements and input state $\ket{1\dots10\dots0}$, with $n$ photons over $m$ modes. The output modes are measured using photon number detection. For all $j\in\{1,\dots,k\}$, the unitary interferometer $U_j$, acting on $m-j$ modes, depends on the measurement outcomes $p_1,\dots,p_j$. The adaptive measurement outcomes $p_1,\dots,p_k$ are used to drive the computation, whose final outcome is $s_1,\dots,s_{m-k}$.}
		\label{fig:adaptive}
	\end{center}
\end{figure*}

We write $\Pr_{m,n}[.|\bm t]$ the probability distribution of the outputs over $\Phi_{m,n}$ of the unitary interferometer $U$ acting on an input $\ket{\bm t}$. With the previous notations we obtain, for all $p,q\in\mathbb{N}$, all $\bm s\in\Phi_{m,p}$ and all $\bm t\in\Phi_{m,q}$,
\be
\text{Pr}_{m,n}[\bm s|\bm t]=\frac{|\Per(U_{\bm s,\bm t})|^2}{\bm s!\bm t!}\delta_{pq},
\label{proba}
\ee
where $\delta_{pq}$ is the Kronecker symbol.

In what follows, we fix the input state $\ket{\bm t}=\ket{\bm1^n\bm0^{m-n}}$, with single-photon states in the first $n$ modes and vacuum states in all other modes, where the superscript indicates the size of the string $(0,\dots,0)$ or $(1,\dots,1)$ when there is a possible ambiguity. We consider the case of linear optical quantum computing with adaptive photon-number measurements, which we refer to as adaptive linear optics (see Fig.~\ref{fig:adaptive}). We denote by $k\in\{0,\dots,m\}$ the number of single-mode adaptive measurements. Without loss of generality, we assume that the first $k$ modes are measured adaptively throughout the computation and we write $\bm p=(p_1,\dots,p_k)$ the adaptive measurement outcomes. For $r\in\mathbb N$ and $\bm p\in\Phi_{k,r}$, let us define
\be
U^{\bm p}:=\left[\mathbb 1_k\oplus U_k(p_1,\dots,p_k)\right]\dots\left[\mathbb 1_1\oplus U_1(p_1)\right]U_0,
\label{Up}
\ee
where $\mathbb 1_j$ is the identity matrix of size $j$ and where the unitary matrices $U_j$ depend on the measurement outcomes $p_1,\dots,p_j$ for all $j\in\{1,\dots,k\}$.
An adaptive interferometer $\mathcal U$ over $m$ modes with $n$ input photons and $k$ adaptive measurements is then represented as a family of nonadaptive unitary interferometers 
\be
\label{eq:adapUnit}
\mathcal U := \left\{U^{\bm p}\;|\;\bm p\in\Phi_{k,r}, \,0\le r\le n\right\},
\ee
for each possible adaptive measurement outcome $\bm p$. The matrix $U^{\bm p}$ describes the interferometer in Fig.~\ref{fig:adaptive}, when the adaptive measurement outcome $\bm p=(p_1,\dots,p_k)\in\Phi_{k,r}$ has been obtained, where $r$ is the number of photons detected during the adaptive measurements. 
In this case, the output state is a pure state which reads:
\be
\Tr_k\left[(\ket{\bm p}\!\bra{\bm p}\otimes\mathbb 1_{m-k})\hat U^{\bm p}\ket{\bm t}\!\bra{\bm t}\hat U^{\bm p\dag}\right],
\label{outputALO}
\ee
where the partial trace is over the first $k$ modes and where $\ket{\bm p}$ denotes the $k$-mode Fock state $\ket{p_1\dots p_k}$. At the end of the computation, all the remaining $m-k$ modes are measured with photon-number detection, yielding the final outcome $\bm s=(s_1,\dots,s_{m-k})\in\Phi_{m-k,n-r}$.


\subsection{Quantum probability and overlap estimation}

We now detail how to perform probability estimation or overlap estimation with an adaptive linear optical interferometer as described in the previous section.

For estimating an output probability with a quantum circuit, one may run the circuit $O(\poly m)$ times, obtaining classical outcomes, for which the frequency gives a polynomially precise additive estimate of the probability which can be computed efficiently. In the case of a circuit with adaptive measurements, one only looks at the final measurement outcomes and the same holds for adaptive linear optical computations.

For estimating the overlap of output states of two unitary quantum circuits, one may run both circuits in parallel and compare their quantum output states, for example with the swap test~\cite{buhrman2001quantum}. Doing so a polynomial number of times provides a polynomially precise estimate of the overlap. Alternatively, writing $C_1$ and $C_2$ the unitary circuits, one may build the circuit $C_1C_2^\dag$ and project the output quantum state onto the input state.

In the case of circuits with adaptive measurements, there is a different output state for each adaptive measurement outcome. In particular, if the number of possible adaptive measurement outcomes is exponential in the size of the computation, then the probability distribution for these outcomes has to be concentrated on a polynomial number of events for the quantum overlap estimation to be efficient. This is because in order to compute a polynomially precise estimate of the overlap, say, $|\braket{\phi|\psi}|^2$, the states $\ket\phi$ and $\ket\psi$, both corresponding to specific adaptive measurement results, have to be obtained a polynomial number of times.

For adaptive linear optics over $m$ modes with $n$ input photons and $k$ adaptive measurements, the number of possible adaptive measurement outcomes is given by
\be
\ba
\sum_{r=0}^n{|\Phi_{k,r}|}&=\sum_{r=0}^n{\binom{k+r-1}r}\\
&=\binom{n+k}n,
\ea
\ee
where the sum is over the total number of photons detected at the stage of the adaptive measurements. Hence, for overlap estimation to be efficient, either the probability distribution for the adaptive measurements outcomes is concentrated on a polynomial number of outcomes, or $\binom{n+k}{n}=O(\poly m)$, which is the case for example when $n=O(1)$ and $k=O(m)$, or $n=O(\log m)$ and $k=O(\log m)$, or $n=O(m)$ and $k=O(1)$.
In what follows, we do not assume concentration of the adaptive measurement outcome probability distribution and consider general interferometers with adaptive measurements. In this setting, the quantum efficient regime for overlap estimation thus corresponds to $\binom{n+k}{n}=O(\poly m)$.

Let $\mathcal U=\{U^{\bm r}|\bm r\in\Phi_{k,r},\,0\le r\le n\}$ be an adaptive linear interferometer with $n$ input photons and $k$ adaptive measurements and let $\ket\phi$ and $\ket\psi$ be two output states. Let $\bm p$ and $\bm q$ denote the outcomes of the adaptive measurements for $\ket\phi$ and $\ket\psi$, respectively, so that $U^{\bm p}$ is the interferometer for $\ket\phi$ and $U^{\bm q}$ is the interferometer for $\ket\psi$, with input Fock state $\ket{\bm t}$. With Eq.~\eqref{outputALO} we obtain
\be
\ba
\,&|\braket{\phi|\psi}|^2\\
&\!=\Tr\left[\Tr_k[(\ket{\bm p}\!\bra{\bm p}\otimes\mathbb 1_{m-k})\hat U^{\bm p}\ket{\bm t}\!\bra{\bm t}\hat U^{\bm p\dag}]\ket\psi\!\bra\psi\right]\\
&\!=\Tr\left[(\ket{\bm p}\!\bra{\bm p}\otimes\mathbb 1_{m-k})\hat U^{\bm p}\ket{\bm t}\!\bra{\bm t}\hat U^{\bm p\dag}(\mathbb 1_k\otimes\ket\psi\!\bra\psi)\right]\\
&\!=\Tr\left[\hat U^{\bm p}\ket{\bm t}\!\bra{\bm t}\hat U^{\bm p\dag}(\ket{\bm p}\!\bra{\bm p}\otimes\ket\psi\!\bra\psi)\right]\\
&\!=\Tr\left[\ket{\bm t}\!\bra{\bm t}\hat U^{\bm p\dag}\left(\ket{\bm p}\!\bra{\bm p}\otimes\ket\psi\!\bra\psi\right)\hat U^{\bm p}\right].
\ea
\label{overlapALO}
\ee
Because of the conservation of the total number of photons, the overlap between the states $\ket\phi$ and $\ket\psi$ is zero if $|\bm p|\neq|\bm q|$.
Otherwise, it can be estimated using a polynomial number of copies of the state $\ket\psi$ as follows: send the input $\ket{\bm p}\otimes\ket\psi$ into the interferometer with unitary matrix $U^{\bm p\dag}$ and measure the photon number in each output mode; record the value $1$ if the measurement pattern matches the Fock state $\ket{\bm t}$ and the value $0$ otherwise. Then, the mean of the obtained values yields a polynomially precise estimate of the overlap $|\braket{\phi|\psi}|^2$ by Eq.~\eqref{overlapALO} and Hoeffding inequality. A similar procedure can be followed when $\ket\phi$ and $\ket\psi$ are output states of two different adaptive interferometers.

Note that this overlap estimation requires the preparation of the Fock state $\ket{\bm p}$ but no adaptive measurements. By symmetry, one could estimate the overlap alternatively using a polynomial number of copies of the state $\ket\phi$ and preparing the Fock state $\ket{\bm q}$, or by mixing copies of $\ket\phi$ and $\ket\psi$. In practice, one may run the adaptive interferometer $\mathcal U$ and apply the above procedure for $\ket\phi$ or $\ket\psi$ on the fly, depending on whether the adaptive measurement outcome obtained is equal to $\bm q$ or $\bm p$ (see Algorithm~\ref{algo:kernelEstimation} for overlap estimation including this state preparation step).

These tasks of probability or overlap estimation may be performed in particular as subroutines for machine learning algorithms, as we detail in the next section.

\begin{algorithm}[ht]
 \textbf{Input:} Adaptive interferometer $\mathcal U=\{U^{\bm r}|\bm r\in\Phi_{k,r},\,0\le r\le n\}$, adaptive measurement outcomes $\bm p,\bm q$.
 
 \textbf{Parameters:}  Input $\ket{\bm t}$ and number of shots $T$.
 
 Set the state preparation counter $c_{sp}=0$.
 
 Set the overlap counter $c_{over}=0$.
 
 \While{$c_{sp}<T$}{Run $\mathcal U$ on input $\ket{\bm t}$, obtaining adaptive measurement outcome $\bm r$ and output state $\ket\chi$.
 
 \eIf{$\bm r=\bm p$}{$c_{sp}\rightarrow c_{sp}+1$.
  
  Run $U^{\bm q\dag}$ on input $\ket{\bm q}\otimes\ket\chi$.
  
  Measure in photon number basis.
  
  Record measurement outcome $\bm s$.
  
  \eIf{$\bm s=\bm t$}{$c_{over}\rightarrow c_{over}+1$,}{$c_{over}\rightarrow c_{over}$.}
 }
 
 \eIf{$\bm r=\bm q$}{$c_{sp}\rightarrow c_{sp}+1$.
  
  Run $U^{\bm p\dag}$ on input $\ket{\bm p}\otimes\ket\chi$.
  
  Measure in photon number basis.
  
  Record measurement outcome $\bm s$.
  
  \eIf{$\bm s=\bm t$}{$c_{over}\rightarrow c_{over}+1$,}{$c_{over}\rightarrow c_{over}$.}
 }{}
 }
\textbf{return} $c_{over}/T$.
 \caption{Overlap estimation for adaptive linear optics}
  \label{algo:kernelEstimation}
\end{algorithm}
%


\subsection{Support vector machine with adaptive linear optics}
\label{sec:SVM}


We consider a training dataset $\mathcal{T}$, with $\vert \mathcal{T} \vert$ points of the form $\{(\vv x_1,y_1),\dots,(\vv x_{\vert \mathcal{T} \vert},y_{\vert \mathcal{T} \vert})\}$, where $\vv x_l \in \mathbb{R}^d$, $y_l \in C$, $\forall l\in \{1\dots,{\vert \mathcal{T} \vert}\}$ and where $C = \{-1,1\}$ in the case of binary classification. 

We also consider a feature map that is given by a subset of the family introduced in Eq.~\eqref{eq:adapUnit}: $\mathcal{U}_\phi = \{U_{l}^{\bm p_{l}}\}_l$, in which for each classical data $x_l$ there is a unitary operator $U_{l}^{\bm p_{l}}$.

\subsubsection{Support vector machine with quantum kernel methods}

The goal of a support vector machine \cite{10.5555/211359, Burges1998, 10.5555/2517747} is to find the maximum-margin hyperplane that divides the set of points by their $y$ values. Such an hyperplane is defined by a vector $\vv w \in \mathbb{R}^d$ and $b\in \mathbb{R}$. With the hard-margin condition we have,
\begin{equation}\label{eq:SVMHyperplane}
   y_l(\vv w. \vv x_l + b) \geq 1, \quad \forall l \in \{1\dots,{\vert \mathcal{T} \vert}\}.
\end{equation}
In this case, a decision function over a new point $\vv x$ can be constructed directly from the hyperplane as follows:
\begin{equation}\label{eq:decisionFunction}
   f(\vv x) = \text{sign}(\vv w.\vv x + b).
\end{equation}
When we have access to a feature map $\phi$, the vector $\vv{w}$ can be decomposed as 
\begin{equation}\label{eq:classVecFeatSpace}
   \vv w =\sum_{l=1}^{|\mathcal{T}|} \alpha_l y_l \phi(x_l),
\end{equation}
for some $\alpha_l\in\mathbb C$, and the decision function becomes
\begin{equation}\label{eq:decisionFunction2}
   f(\vv x) = \text{sign}\left(\sum_{l=1}^{\vert \mathcal{T} \vert} \alpha_l y_l \kappa(\vv x_l,\vv x) + b\right),
\end{equation}
where the kernel $\kappa(\vv x_l,\vv x)$ has an interpretation in terms of the feature map $\phi$ as given in Eq.~\eqref{eq:fMapKernel}.

In order to find the best separating hyperplane, one may use a quadratic programming solver.
In section~\ref{sec:exp}, we consider the case where the programming solver is composed of a hybrid quantum-classical algorithm whereas in section~\ref{sec:imp}, such solver will be external to the quantum device.
The optimisation program that needs to be solved is:
\begin{equation}  \label{eq:lagSVMOpt}
	\begin{split}
	    \underset{\alpha_l}{\text{maximize}} \;& \sum_{l=1}^{\vert \mathcal{T} \vert} \alpha_l - \frac{1}{2} \sum_{l,l'} \alpha_l y_l\alpha_{l'} y_{l'} \kappa(\vv x_l,\vv x_{l'}) \\
		\text{subject to}&\;\sum_{l=1}^{\vert \mathcal{T} \vert} \alpha_l y_l = 0 \\
		&\; 0 \leq \alpha_l \leq 1/2n\lambda , \quad \forall l\in \{1\dots,\vert \mathcal{T} \vert\},
	\end{split}
\end{equation}
where $\lambda$ is a parameter that is introduced for the soft-margin condition. By solving such problem, one obtains the optimal coefficients $\{\alpha_l^*\}$ and $b^*$ of the hyperplane (see Eq.~\eqref{eq:classVecFeatSpace}). This formulation is for the quantum case , but similar methods can be applied in the classical case \cite{Burges1998}.


\subsubsection{Explicit method: probability estimation}\label{sec:exp}

The explicit method corresponds to the case where the prediction---assigning a classification label to the data---is obtained from the probability distribution when using a quantum device.

\begin{algorithm}[h]
 \textbf{Input:} Training dataset $\mathcal{T}$, feature maps $\mathcal{U}_\phi$, optimisation routine with cost function $J(\vv \theta)$.
 
 \textbf{Parameters:} Input $\ket{\bm t}$, number of shots $T$, initial parameters $\vv \theta_0$.
 
 Set $\vv \theta = \vv \theta_0$.
 
 \While{$J(\vv \theta)$ not converged}{
 
  \For{$l=1$ to ${\vert \mathcal{T} \vert}$}{
  
    Set $c_{p_{l}} = 0$. 
    
    Set $c_y = 0$ $\forall y \in C$.
    
    \While{$c_{p_{l}} < T$}{Run $BS(\vv \theta) U_{l}^{\bm p_{l}}$ on $\ket{\bm t}$, obtaining adaptive measurement outcome $\bm r$ and outcome label $\bm y$.

        \eIf{$\bm r = \bm p_{l}$}
            {$c_{p_{l}}\rightarrow c_{p_{l}}+1$.
        
            $c_y\rightarrow c_y+1$.
         }{}
    \textbf{end while}
    }
    Compute $\Pr(y\vert \bm p_{l}) = c_y/T$.
    
  \textbf{end for}
  }
  Compute $J(\vv \theta)$ and update $\vv \theta$.
  
 \textbf{end while}
 }

\textbf{return} Value $J(\vv \theta^*)$ and final $\vv \theta^*$ .
 \caption{Explicit method: training phase}
  \label{algo:expTrainPh}
\end{algorithm}

\begin{algorithm}[ht]
 \textbf{Input:} Unlabeled data $x$, feature map $\mathcal{U}_{\phi(x)}$, optimal parameters $\vv \theta^*$.
 
 \textbf{Parameters:} Input $\ket{\bm t}$, number of shots $T_x$.
 
    Set $c_{p_{x}} = 0$. 
    
    Set $c_y = 0$ $\forall y \in C$.
    
    \While{$c_{p_{x}} < T_x$}{Run $BS(\vv \theta^*) \mathcal{U}_{\phi(x)}$ on $\ket{\bm t}$, obtaining adaptive measurement outcome $\bm r$ and outcome label $\bm y$.
    
        \eIf{$\bm r = \bm p_{x}$}{$c_{p_{x}}\rightarrow c_{p_{x}}+1$.
        
            $c_y\rightarrow c_y+1$.
         }{}
    \textbf{end while}
    }
    Compute $\Pr(y\vert \bm p_{x}) = c_y/T_x$.
    
\textbf{return} $\text{argmax}_y \{\Pr(y\vert \bm p_{x})\}$ .
 \caption{Explicit method: prediction phase}
  \label{algo:expPredPh}
\end{algorithm}

Hereafter, we consider using adaptive linear optics for the feature map and use its output state as an input state of a second quantum device for the support vector machine \cite{havlivcek2019supervised}. The choice of the feature map ultimately depends on the architecture at hand: one may use, e.g., reconfigurable interferometers~\cite{politi2009integrated} in the linear optics picture. We show that Boson Sampling can be used to realize a support vector machine algorithm by using a hybrid variational quantum-classical for the training phase \cite{PhysRevA.98.032309, farhi2018classification, PhysRevResearch.1.033063}. 
We write the Boson Sampling interferometer operation $BS(\vv \theta)$, where $\vv \theta$ are the parameters of the beam splitters and phase shifters of the interferometer, since any interferometer can be implemented efficiently with only phase shifters and beam splitters~\cite{reck1994experimental}. In order to make a prediction we need to bin the outcomes. For a given function $g:\Phi_{m,n}\rightarrow \{-1,+1\}$, we can write the following observable for the binning:
\be \label{eq:qmeasBSSVM}
    \mathcal{N} = \sum_{s \in \Phi_{m,n}} g(s) \ket{\bm s} \bra{\bm s}.
\ee
The observable $\mathcal{N}$ can also be decomposed in term of projectors as $\mathcal{N} = \Pi_{+1} - \Pi_{-1}$, where $\Pi_y = (\mathbb1+y \mathcal{N})/2$.

For a given data point $\vv x$ with the associated operation $\hat U_x^{\bm p_{x}}$, the probability to obtain the outcome $y$ is:
\be
\ba
& \Pr(y\vert \bm p_{x}) = \Tr \left[\Pi_y BS(\vv \theta) \hat U_{x}^{\bm p_{x}} \ket{\bm t} \bra{\bm t} \hat U_{x}^{\bm p_{x}\dagger} BS(\vv \theta)^\dagger \right]\\
&\quad=\frac12\left(1+y \bra{\bm t} \hat U_{x}^{\bm p_{x}\dagger} BS(\vv \theta)^\dagger \mathcal{N} BS(\vv \theta) \hat U_{x}^{\bm p_{x}} \ket{\bm t}\right).
\ea
\label{eq:probSVM}
\ee
In the quantum circuit model, it has been explicitly shown in \cite{havlivcek2019supervised} that the equivalent of Eq.~\eqref{eq:probSVM} in the circuit picture is related to the decision function of a support vector machine. This proof relies on decomposing the variational quantum circuit in the Pauli basis. Moreover, in \cite{PhysRevA.97.062329}, the authors provide a quantum circuit that simulates Boson Sampling interferometers with input photons, using the quantum Schur transform. Hence, the unitary $BS(\vv \theta)$ can be decomposed in the Pauli basis (for qudits), thus providing the same relation for Eq.~\eqref{eq:probSVM} to the decision function of a support vector machine.

Experimentally, the probability $\Pr(y\vert \bm p_{x})$ can be estimated using the following approximation

\be \label{eq:qmeasBSSVM2}
    \Pr(y\vert \bm p_{x}) \approx \frac{T_{y\vert x}}{T_x},
\ee
where $T_x$ is the number of times where the value $\bm p_{x}$ has been recorded and $T_{y\vert x}$ is the number of times where the value $y$ has been recorded after that the value $\bm p_{x}$ has been also recorded.

In order to train the Boson Sampling device, it is necessary to use a cost function $J(\vv \theta)$ and an optimisation routine. For the optimisation routine, standard variational optimisation methods can be used such as hybrid variational quantum-classical or classical algorithms \cite{PhysRevA.98.032309, farhi2018classification, PhysRevResearch.1.033063}. Different cost functions can be used such as the empirical risk function \cite{havlivcek2019supervised} or the square-loss function \cite{schuld2019quantum}. However, other type of cost function may not be suitable for this kind of problems, such as the ones usually used in gate synthesis or quantum state preparation \cite{arrazola2018machine,heya2018variational}.

The algorithm for the training phase is described in Algorithm~\ref{algo:expTrainPh} and the prediction phase is described in Algorithm~\ref{algo:expPredPh}.


\subsubsection{Implicit method: overlap estimation}\label{sec:imp}

\begin{algorithm}[h]
 \textbf{Input:} Dataset $\mathcal{T}$, Feature maps $\mathcal{U}$, quadratic programming solver.
 
 \textbf{Parameters:} Input $\ket{\bm t}$ and number of shots $T$.
 
 \For{$l=1$ to ${\vert \mathcal{T} \vert}$}{
     \For{$l'=1$ to ${\vert \mathcal{T} \vert}$}{
      Run $U_{l}^{\bm p_{l}\otimes T}$ on input $\ket{\bm t}^{\otimes T}$.
      
      Run Algo.~\ref{algo:kernelEstimation} with input $U_{l}^{\bm p_{l}},U_{l'}^{\bm p_{l'}}$.
      
      Store output at kernel matrix entry $K_{l,l'}=\kappa(\vv x_l,\vv x_{l'})$.

    \textbf{endfor}}
\textbf{endfor}}
Solve the optimisation problem in Eq.~\eqref{eq:lagSVMOpt} with $\mathcal{T}$ and $K$.

\textbf{return} $\{\alpha_l^*\}$ and $b^*$.
 \caption{Implicit method: training phase}
  \label{algo:impTrainPh}
\end{algorithm}

The implicit method corresponds to the case where the kernel is computed by a quantum device, while the rest of the machine learning algorithm is performed on a classical machine. In adaptive linear optics, the kernel can be estimated using Algorithm~\ref{algo:kernelEstimation}. It is possible to proceed to both the training and prediction phase using this hybrid quantum-classical method. The algorithm for the training phase is described in Algorithm~\ref{algo:impTrainPh} and the prediction phase is described in Algorithm~\ref{algo:impPredPh}.

\begin{algorithm}[h]
 \textbf{Input:} Dataset $\mathcal{T}$, unlabeled data $\vv x$, optimal parameters $\{\alpha_l^*\}$ and $b$
 
 \textbf{Parameters:} Input $\ket{\bm t}$ and number of shots $T$.
 
 Set the variable $f = b$.
 
 \For{$l=1$ to ${\vert \mathcal{T} \vert}$}{
  Run $U_{x}^{\bm p_{x}\otimes T}$ on input $\ket{\bm t}^{\otimes T}$.
  
  Run Algo.~\ref{algo:kernelEstimation} with parameters $U_{x}^{\bm p_{x}\otimes T}\ket{\bm t}^{\otimes T},U_{l}^{\bm p_{l}}$.
  
  $f \rightarrow f+ \alpha_l y_l \kappa(\vv x_l,\vv x)$.
  
\textbf{endfor}}

\textbf{return} Return sign$(f)$.
 \caption{Implicit method: prediction phase}
  \label{algo:impPredPh}
\end{algorithm}


\section{Classical simulation of adaptive linear optics}
\label{sec:simulation}

The complexity of probability estimation and overlap estimation of quantum computations has been well studied in the circuit model~\cite{pashayan2015estimating,bravyi2019classical}. In this section, we challenge the quantum algorithms introduced in the previous section by deriving classical algorithms for probability estimation and overlap estimation for adaptive linear optics over $m$ modes. We identify various complexity regimes for different numbers of input photons $n\le m$, different number of adaptive measurements $k\le m$, and different number of photons $r\le n$ detected during the adaptive measurements. To do so, we derive generic expressions for the output probabilities and for the overlap of output states of linear optical interferometers with adaptive measurements.


\subsection{Classical probability estimation}

Firstly, we obtain a classical algorithm for probability estimation of adaptive linear optics over $m$ modes with $n$ input photons and $k$ adaptive measurements.

We first consider the case $k=0$, i.e., Boson Sampling. The probability of an outcome $\bm s\in\Phi_{m,n}$ for a unitary interferometer $U$ given the input $\bm t=(\bm1^n,\bm0^{m-n})\in\Phi_{m,n}$ is given by Eq.~\eqref{proba}:
\be
\text{Pr}_{m,n}[\bm s]=\frac1{\bm s!}|\Per\,(U_{\bm s,\bm t})|^2,
\ee
where $U_{\bm s,\bm t}$ is the $n\times n$ matrix obtained from $U$ by repeating $s_i$ times its $i^{th}$ row for $i\in\{1,\dots,m\}$ and removing its $j^{th}$ column for $j=\{n+1,\dots,m\}$. When $|\bm s|\neq n$ however, the probability is $0$, since the input $\bm t$ has $n$ photons and the linear interferometer does not change the total number of photons.

The permanent of a square matrix of size $n$ can be computed exactly in time $O(n2^n)$, thanks to Ryser's formula~\cite{albert1978combinatorial}. However, polynomially precise estimates of the permanent of a square unitary matrix can be obtained in polynomial time in the size of the matrix using an algorithm due to Gurvits~\cite{gurvits2005complexity}, later generalised to matrices with repeated lines or columns~\cite{aaronson2012generalizing}, so probability estimation can be done classically efficiently, which was already noted in~\cite{Aaronson2013}.

We now turn to the case $k>0$---which to our knowledge has not been treated elsewhere---using the notations of section~\ref{sec:ALO1}. This case is a direct extension of the case $k=0$. With Eq.~\eqref{proba}, for $r\in\mathbb N$, $\bm p\in\Phi_{k,r}$ and $\bm s\in\Phi_{m-k,n-r}$, the probability of an total outcome $(\bm p,\bm s)\in\Phi_{m,n}$ (adaptive measurement and final outcome) is given by
\be
\text{Pr}^{\text{total}}_{m,n}[\bm p,\bm s]=\frac{1}{\bm p!\bm s!}\left|\Per\left(U^{\bm p}_{(\bm p,\bm s),\bm t}\right)\right|^2.
\ee
Let $r\in\{0,\dots,n\}$ and let $\bm s\in\Phi_{m-k,n-r}$. Then, the probability of obtaining the final outcome $\bm s$ after the adaptive measurements reads
\be
\ba
\text{Pr}_{m,n}^{\text{final}}[\bm s]&=\sum_{\bm p\in\Phi_{k,r}}{\text{Pr}^{\text{total}}_{m,n}[\bm p,\bm s]}\\
&=\frac1{\bm s!}\sum_{\bm p\in\Phi_{k,r}}{\frac{1}{\bm p!}\left|\Per\left(U^{\bm p}_{(\bm p,\bm s),\bm t}\right)\right|^2}.
\ea
\label{probak}
\ee
The sum is taken over the elements of $\Phi_{k,r}$, which has $\binom{k+r-1}r$ elements, where $r\le n$ is the total number of photons detected during the adaptive measurements. Each permanent in the sum can be estimated additively with polynomial precision in time $O(\poly m)$, using the generalised algorithm from~\cite{aaronson2012generalizing} (see Appendix~\ref{app:simproba} for details). Hence, probability estimation can be done classically efficiently whenever the sum has a polynomial number of terms. In particular, as long as both $k$ and $r$ are $O(\log m)$, the output probability can be estimated efficiently. The simulability regimes are summarised in Table~\ref{tab:probaest}. 

The universal quantum computing regime corresponds to $n=O(m)$ and $k=O(m)$~\cite{knill2001scheme}. The time complexity of the classical simulation is $O\left(\binom{k+r-1}r\poly m\right)$ and there is a possibility of subuniversal quantum advantage for probability estimation for $n=O(\log m)$ and $k=O(m)$, or $n=O(m)$ and $k=O(\log m)$. Moreover, the fraction $\frac rn$ of input photons detected during the adaptive measurements has to be sufficiently large to prevent efficient classical simulation.

\begin{table}
\centering
\setlength\tabcolsep{0.1pt}
\bgroup
\def\arraystretch{2}
\begin{tabular}{| c | c | c | c |}
\hline
\diagbox[height=7ex,width=5em]{$\quad\; r$}{$k\;\;$} & $O(1)$ & $O(\log m)$ & $O(m)$\\
\hline
$O(1)$ & \cellcolor{green!15} $\quad\quad\quad\quad\;$& \cellcolor{green!15} $\quad\quad\quad\quad\;$& \cellcolor{green!15} $\quad\quad\quad\quad\;$\\
\hline
$O(\log m)$ & \cellcolor{green!15} & \cellcolor{green!15} & \cellcolor{red!15} \\
\hline
$O(m)$ & \cellcolor{green!15} &  \cellcolor{red!15} & \cellcolor{red!15} \\
\hline
\end{tabular}
\egroup
\caption{Simulability regimes for probability estimation as a function of the parameters $r$ (the total number of photons detected during the adaptive measurements) and $k$ (the number of single-mode adaptive measurements). In green is the parameter regime for which the classical probability estimation is efficient, i.e., takes polynomial time in $m$, while in red is the regime where it is no longer efficient.}
\label{tab:probaest}
\end{table}
%


\subsection{Classical overlap estimation}

In this section, we obtain a classical algorithm for overlap estimation of output states of adaptive linear optical interferometers over $m$ modes with $n$ input photons and $k$ adaptive measurements.

Once again, we start with $k=0$. With Eq.~\eqref{transf}, the output state of an $m$-mode interferometer $U$ with input state $\bm t\in\Phi_{m,n}$ reads
\be
\ba
\ket\phi&=\sum_{\bm s\in\Phi_{m,n}}{\braket{\bm s|\hat U|\bm t}\ket{\bm s}}\\\
&=\sum_{\bm s\in\Phi_{m,n}}{\frac{\Per\,(U_{\bm s,\bm t})}{\sqrt{\bm s!\bm t!}}\ket{\bm s}},
\ea
\ee
where $U_{\bm s,\bm t}$ is the $n\times n$ matrix obtained from $U$ by repeating $s_i$ times its $i^{th}$ row for $i\in\{1,\dots,m\}$ and repeating $t_j$ times its $j^{th}$ row for $j\in\{1,\dots,m\}$. The composition of two interferometers is another interferometer which unitary representation is the product of the unitary representations of the composed interferometers.
Hence, the inner product of the output states $\ket\phi$ and $\ket\psi$ of two $m$-mode interferometers $U$ and $V$ with the same input state $\bm t\in\Phi_{m,n}$, is equal to the matrix element $\bm t,\bm t$ of $\hat U^\dag\hat V$:
\be
\ba
\braket{\phi|\psi}&=\sum_{\bm u,\bm v\in\Phi_{m,n}}{\braket{\bm t|\hat U^\dag|\bm u}\braket{\bm v|\hat V|\bm t}\braket{\bm u|\bm v}}\\
&=\sum_{\bm s\in\Phi_{m,n}}{\braket{\bm t|\hat U^\dag|\bm s}\braket{\bm s|\hat V|\bm t}}\\
&=\braket{\bm t|\hat U^\dag\hat V|\bm t}\\
&=\frac{\Per\left[(U^\dag V)_{\bm t,\bm t}\right]}{\bm t!},
\ea
\ee
where we used in the third line $\bm t\in\Phi_{m,n}$ and the fact that $\hat U^\dag\hat V$ conserves the space $\Phi_{m,n}$. With the input $\bm t=(\bm1^n,\bm0^{m-n})$ with $n$ photons in $m$ modes, this reduces to
\be
\ba
\braket{\phi|\psi}=\Per\left[(U^\dag V)_n\right],
\ea
\ee
where $(U^\dag V)_n$ is the $n\times n$ top left submatrix of $U^\dag V$. Hence, the inner product and the overlap may be approximated to a polynomial precision efficiently, since this is the case for the permanent~\cite{gurvits2005complexity,aaronson2012generalizing}, as detailed in the previous section.

We now consider the case $k>0$. Let $r\in\mathbb N$ and let $\bm p\in\Phi_{k,r}$. With Eq.~\eqref{transf} and Eq.~\eqref{outputALO}, writing $\text{Pr}^{\text{adap}}_{m,n}[\bm p]$ the probability of obtaining the adaptive measurement outcome $\bm p$, the output state of the interferometer $U^{\bm p}$ with $k$ adaptive measurements with input $\bm t=(\bm1^n,\bm0^{m-n})$ in Fig.~\ref{fig:adaptive}, when the adaptive measurement outcome $\bm p$ is obtained, reads
\be
\frac1{\sqrt{\text{Pr}^{\text{adap}}_{m,n}[\bm p]}}\ket{\psi_{\bm p}},
\ee
where
\be
\ket{\psi_{\bm p}}:=\sum_{\bm s\in\Phi_{m-k,n-r}}{\frac{\Per\left(U^{\bm p}_{(\bm p,\bm s),\bm t}\right)}{\sqrt{\bm p!\bm s!}}\ket{\bm s}},
\ee
and where $\text{Pr}^{\text{adap}}_{m,n}[\bm p]=\braket{\psi_{\bm p}|\psi_{\bm p}}$. The inner product of two (not normalised) output states $\ket{\psi_{\bm p}}$ and $\ket{\psi_{\bm q}}$ of $m$-mode interferometers $U^{\bm p}$ and $V^{\bm q}$ with $k$ adaptive measurements is zero if $|\bm p|\neq|\bm q|$, and when $|\bm p|=|\bm q|=r$, it is thus given by
\be
\ba
\,&\braket{\psi_{\bm p}|\psi_{\bm q}}=\frac1{\sqrt{\bm p!\bm q!}}\\
&\times\!\!\!\sum_{\bm s\in\Phi_{m-k,n-r}}{\frac1{\bm s!}\Per\left(U^{\bm p\dag}_{\bm t,(\bm p,\bm s)}\right)\Per\left(V^{\bm q}_{(\bm q,\bm s),\bm t}\right)}.
\ea
\label{overlapS}
\ee
\begin{table}[t]
\centering
\setlength\tabcolsep{0.1pt}
\bgroup
\def\arraystretch{2}
\begin{tabular}{| c | c | c | c |}
\hline
\diagbox[height=7ex,width=5em]{$\quad\; n$}{$k\;\;$} & $O(1)$ & $O(\log m)$ & $O(m)$\\
\hline
$O(1)$ & \cellcolor{green!15}$\quad\quad\quad\quad\quad$ & \cellcolor{green!15}$\quad\quad\quad\quad\quad$ & \cellcolor{green!15}$\quad\quad\quad\quad\quad$\\
\hline
$O(\log m)$ & \cellcolor{green!15}& \cellcolor{green!15}& \cellcolor{green!15}$\cdots$\\
\hline
$O(m)$ & \cellcolor{red!15}& \cellcolor{red!15}$\cdots$& \cellcolor{red!15}$\cdots$\\
\hline
\end{tabular}
\egroup
\caption{Simulability regimes for classical overlap estimation as a function of the parameters $n$ and $k$. In green is the parameter regime for which the classical overlap estimation may be efficient (i.e., may take polynomial time in $m$) depending on the value of $r$, and in red is the regime where the classical algorithm is no longer efficient. The cells containing a symbol ``$\cdots$'' correspond to parameter regimes for which the quantum algorithm for estimating the overlap is no longer efficient.}
\label{tab:overlapest}
\end{table}

This expression is a sum of $|\Phi_{m-k,n-r}|$ terms, which is exponential in $m$ whenever $n$ is not constant.
Using properties of the permanent~\cite{percus2012combinatorial}, we show that the expression in Eq.~\eqref{overlapS} may actually be rewritten as a sum over fewer terms, which constitutes our main technical result:

\begin{lem}\label{lem:overlapS}
Let $r\in\mathbb N$. The inner product of two (not normalised) output states $\ket{\psi_{\bm p}}$ and $\ket{\psi_{\bm q}}$ of $m$-mode interferometers $U^{\bm p}$ and $V^{\bm q}$ with adaptive measurements outcome $\bm p,\bm q\in\Phi_{k,r}$ is given by
\be
\ba
\,&\braket{\psi_{\bm p}|\psi_{\bm q}}=\frac1{\sqrt{\bm p!\bm q!}}\\
&\times\sum_{\substack{\bm i,\bm j\in\{0,1\}^n\\|\bm i|=|\bm j|=r}}{\Per\left(A^{\bm i}\right)\Per\left(B^{\bm j}\right)\Per\left(C^{\bm i,\bm j}\right)},
\ea
\ee
where for all $\bm i,\bm j\in\{0,1\}^n$ such that $|\bm i|=|\bm j|=r$,
\be
A^{\bm i}=U^{\bm p\dag}_{(\bm i,\bm0^{m-n}),(\bm p,\bm0^{m-k})}
\ee
is an $r\times r$ matrix which can be obtained efficiently from $U^{\bm p}$,
\be
B^{\bm j}=V^{\bm q}_{(\bm q,\bm0^{m-k}),(\bm j,\bm0^{m-n})}
\ee
is an $r\times r$ matrix which can be obtained efficiently from $V^{\bm q}$, and
\be
C^{\bm i,\bm j}\!\!=\!U^{\bm p\dag}_{(\bm1^n-\bm i,\bm0^{m-n}),(\bm0^k,\bm1^{m-k})}V^{\bm q}_{(\bm0^k,\bm1^{m-k}),(\bm1^n-\bm j,\bm0^{m-n})}
\ee
is an $(n-r)\times(n-r)$ matrix which can be obtained efficiently from $U^{\bm p}$ and $V^{\bm q}$.
\end{lem}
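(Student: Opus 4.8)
The plan is to start from the expansion of $\braket{\psi_{\bm p}|\psi_{\bm q}}$ over $\bm s\in\Phi_{m-k,n-r}$ in Eq.~\eqref{overlapS} and reorganise the two permanents appearing there using two standard properties of the permanent~\cite{percus2012combinatorial}, both of which follow directly from the definition \eqref{per} by grouping the permutations in the sum. The first is a Laplace-type expansion: if the columns of a square matrix are split into two blocks, its permanent equals the sum, over subsets of the rows of size equal to the first block, of the product of the permanents of the two resulting square blocks (and the transposed statement, splitting rows and summing over column subsets). The second is a Cauchy--Binet-type identity: for an $\ell\times N$ matrix $M$ and an $N\times\ell$ matrix $M'$, one has $\Per(MM')=\sum_{\bm w}\tfrac1{\bm w!}\Per(M_{\cdot,\bm w})\,\Per(M'_{\bm w,\cdot})$, where $\bm w$ ranges over $\mathbb N^N$ with $|\bm w|=\ell$ and $M_{\cdot,\bm w}$ (resp.\ $M'_{\bm w,\cdot}$) repeats the columns of $M$ (resp.\ the rows of $M'$) according to $\bm w$.

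First I would apply the Laplace expansion to $\Per(U^{\bm p\dag}_{\bm t,(\bm p,\bm s)})$. This is an $n\times n$ matrix whose rows are indexed by $\bm t=(\bm1^n,\bm0^{m-n})$ --- so each of the first $n$ modes occurs exactly once, whence the relevant row subsets of size $r$ are parametrised precisely by $\bm i\in\{0,1\}^n$ with $|\bm i|=r$ --- and whose columns split into the ``$\bm p$-block'' (modes $1,\dots,k$ with multiplicities $\bm p$, giving $r$ columns) and the ``$\bm s$-block'' (modes $k+1,\dots,m$ with multiplicities $\bm s$, giving $n-r$ columns). The expansion gives $\Per(U^{\bm p\dag}_{\bm t,(\bm p,\bm s)})=\sum_{\bm i}\Per(A^{\bm i})\,\Per(U^{\bm p\dag}_{(\bm1^n-\bm i,\bm0^{m-n}),(\bm0^k,\bm s)})$ with $A^{\bm i}=U^{\bm p\dag}_{(\bm i,\bm0^{m-n}),(\bm p,\bm0^{m-k})}$ as in the statement. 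Symmetrically, applying the transposed expansion to $\Per(V^{\bm q}_{(\bm q,\bm s),\bm t})$, now splitting the rows into a $\bm q$-block and an $\bm s$-block and summing over column subsets $\bm j\in\{0,1\}^n$ with $|\bm j|=r$, yields $\sum_{\bm j}\Per(B^{\bm j})\,\Per(V^{\bm q}_{(\bm0^k,\bm s),(\bm1^n-\bm j,\bm0^{m-n})})$ with $B^{\bm j}=V^{\bm q}_{(\bm q,\bm0^{m-k}),(\bm j,\bm0^{m-n})}$.

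Substituting both expansions into Eq.~\eqref{overlapS} and interchanging the finite sums over $\bm i,\bm j$ and $\bm s$, the expression becomes $\tfrac1{\sqrt{\bm p!\bm q!}}\sum_{\bm i,\bm j}\Per(A^{\bm i})\Per(B^{\bm j})$ times the inner sum $\sum_{\bm s\in\Phi_{m-k,n-r}}\tfrac1{\bm s!}\,\Per(U^{\bm p\dag}_{(\bm1^n-\bm i,\bm0^{m-n}),(\bm0^k,\bm s)})\,\Per(V^{\bm q}_{(\bm0^k,\bm s),(\bm1^n-\bm j,\bm0^{m-n})})$. This inner sum is exactly of the shape handled by the Cauchy--Binet identity, with $\ell=n-r$, $N=m-k$, $M=U^{\bm p\dag}_{(\bm1^n-\bm i,\bm0^{m-n}),(\bm0^k,\bm1^{m-k})}$ and $M'=V^{\bm q}_{(\bm0^k,\bm1^{m-k}),(\bm1^n-\bm j,\bm0^{m-n})}$ (the $\bm s$-block columns of $M$, resp.\ rows of $M'$, ranging over the last $m-k$ modes); hence it equals $\Per(MM')=\Per(C^{\bm i,\bm j})$ with $C^{\bm i,\bm j}=MM'$ precisely the $(n-r)\times(n-r)$ matrix in the statement. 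Collecting the factors gives the claimed identity. Finally, $A^{\bm i}$ and $B^{\bm j}$ are submatrices (with repeated rows/columns) of the $m\times m$ matrices $U^{\bm p}$ and $V^{\bm q}$, and $C^{\bm i,\bm j}$ is a product of two such submatrices, so all three are obtained in $\poly(m)$ time.

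The only real difficulty I anticipate is bookkeeping with the multi-index notation: keeping straight which of the two permanents receives a column split versus a row split, checking that the complementary blocks $\bm1^n-\bm i$ and $\bm1^n-\bm j$ land in the correct row/column slots, and confirming that the residual $\bm s$-dependent factors are exactly the column- and row-restrictions of $M$ and $M'$ required by the Cauchy--Binet identity. None of this is conceptually deep, but it is where an error is most likely to slip in; checking the special case $k=0$, $r=0$ --- where the formula must collapse to $\braket{\phi|\psi}=\Per[(U^\dag V)_n]$ --- provides a useful consistency test.
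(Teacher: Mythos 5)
Your proposal is correct and follows essentially the same route as the paper's proof in Appendix~C: a generalised Laplace expansion of $\Per(U^{\bm p\dag}_{\bm t,(\bm p,\bm s)})$ along the $\bm p$-columns and of $\Per(V^{\bm q}_{(\bm q,\bm s),\bm t})$ along the $\bm q$-rows, followed by an exchange of sums and the permanent composition (Cauchy--Binet-type) formula to collapse the sum over $\bm s\in\Phi_{m-k,n-r}$ into $\Per(C^{\bm i,\bm j})$. The matrices $M$, $M'$ you identify are exactly the paper's $\tilde U^{\bm p,\bm i}$, $\tilde V^{\bm q,\bm j}$, so no gap remains.
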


\noindent We give a detailed proof in Appendix~\ref{app:prooflemma2}. By Lemma~\ref{lem:overlapS}, the inner product is expressed as the modulus squared of a sum over $\binom nr^2$ products of three permanents, of square matrices of sizes $|\bm p|=r$, $|\bm q|=r$ and $(n-r)$, respectively. In the worst case, when $r=n/2$, the sum has at most $O(4^n)$ terms, up to a polynomial factor in $n$. In particular, when $n=O(\log m)$ or $r=O(1)$, the inner product reduces to a sum of a polynomial number of terms, which can all be computed in time $O(\poly m)$ with Ryser's formula~\cite{albert1978combinatorial}. An interesting fact is that the cost of computing the inner product does not depend explicitely on the number $k>0$ of adaptive measurements. However, it does depend explicitly on $r$ the total number of photons detected during the adaptive measurements, and a larger number of adaptive measurements $k$ favors the detection of a larger number of photons $r$.

\begin{table}[t]
\centering
\setlength\tabcolsep{0.1pt}
\bgroup
\def\arraystretch{2}
\begin{tabular}{| c | c | c | c |}
\hline
\diagbox[height=7ex,width=5em]{$\quad\; n$}{$r\;\;$} & $O(1)$ & $O(\log n)$ & $O(n)$\\
\hline
$O(1)$ & \cellcolor{green!15}$\quad\quad\quad\quad\quad$ & \cellcolor{green!15}$\quad\quad\quad\quad\quad$ & \cellcolor{green!15}$\quad\quad\quad\quad\quad$\\
\hline
$O(\log m)$ & \cellcolor{green!15}  & \cellcolor{green!15}  & \cellcolor{green!15} \\
\hline
$O(m)$ & \cellcolor{green!15}  & \cellcolor{red!15} & \cellcolor{red!15} \\
\hline
\end{tabular}
\egroup
\caption{Simulability regimes for classical overlap estimation as a function of the parameters $n$ (total number of input photons) and $r$ (total number of photons detected in the adaptive measurements). The regimes are obtained from the running time of the classical algorithm using Stirling's equivalent $n!\sim\sqrt{2\pi n}(\frac ne)^n$.}
\label{tab:overlapest2}
\end{table}

The overlap of normalised ouput states is given by
\be
\frac{|\braket{\psi_{\bm p}|\psi_{\bm q}}|^2}{\braket{\psi_{\bm p}|\psi_{\bm p}}\braket{\psi_{\bm q}|\psi_{\bm q}}},
\ee
which may also be computed efficiently when $n=O(\log m)$. In this case, the classical algorithm for overlap estimation simply computes the above expression, using Lemma~\ref{lem:overlapS} for each of the inner products. The running time of this classical algorithm thus is $O(\binom nr^2\poly m)$ and its efficiency is summarised as a function of $n$ and $k$ in Table~\ref{tab:overlapest} and as a function of $n$ and $r$ in Table~\ref{tab:overlapest2}.

Since the quantum efficient regime corresponds to $\binom{k+n}n=O(\poly m)$ there is a possibility of quantum advantage for overlap estimation when $k=O(1)$ and $n=O(m)$. However, like for probability estimation, the fraction $\frac rn$ of input photons detected during the adaptive measurements has to be sufficiently large to prevent efficient classical simulation. In this case, when the number of adaptive measurements satisfies $k=O(1)$, the interferometers used must concentrate many photons on the adaptive measurements.


\section{Conclusion}
\label{sec:conclusion}

In this work, we have given a roadmap for performing quantum variational classification and quantum kernel estimation using adaptive linear optical interferometers. 
We have investigated the transition of classical simulation between Boson Sampling~\cite{Aaronson2013} and the Knill--Laflamme--Milburn scheme for universal quantum computing~\cite{knill2001scheme}, in terms of the number of adaptive measurements performed. In particular, we have derived classical algorithms for simulating the quantum computational subroutines involved: output probability estimation and output state overlap estimation. 

In the case of probability estimation, the possible regimes for quantum advantage are incompatible with near-term implementations: both the number of adaptive measurements $k$ and the number of input photons $n$ must be greater than $\log m$, where $m$ is the number of modes. On the other hand, for overlap estimation, there is a possibility of near-term computational advantage using adaptive linear optics with a single adaptive measurement, requiring the preparation of photon number states. Note that the interferometer should be concentrating many photons $r$ at the stage of the adaptive measurements in order to obtain overlaps that are possibly hard to estimate. Using more adaptive measurements does not increase the complexity (apart from polynomial factors in $m$), but may ease the detection of a larger number of photons during the adaptive measurements.

Our results suggest regimes where quantum advantage for machine learning with adaptive linear optics is possible, in the parameter regimes where our classical simulation algorithms fail to be efficient: it is an interesting open question whether better classical simulation algorithms for the computational subroutines involved can be found, or even classical algorithms solving directly the machine learning problems efficiently. In any case, our results restrict the parameter regimes for which such a quantum advantage may be possible: support vector machine with adaptive linear optics has to take place either in a bunching regime---concentrating many photons in the adaptive measurements---or using a number of adaptive measurements scaling with the size of the problem. In both cases, this imposes strong experimental requirements. 

Our results have identified a sweet spot for quantum kernel estimation with adaptive linear optics using a single adaptive measurement, which would be interesting to demonstrate experimentally. In the longer term, variational classification with adaptive linear optics could also be interesting since it may enable quantum advantage in a regime where the quantum algorithm for overlap estimation is no longer efficient.

As previously mentioned, it is a pressing question whether more efficient classical algorithms may be derived. In practical settings, taking into account photon losses could help providing more efficient classical simulation algorithms~\cite{garcia2019simulating}. We leave these considerations for future work.


\section*{Acknowledgements}

AS has been supported by a KIAS individual grant (CG070301) at Korea Institute for Advanced Study.
DM acknowledges support from the ANR through project ANR-17-CE24-0035 VanQuTe.


\bibliographystyle{linksen}
\bibliography{bibliography}


\onecolumn\newpage
\appendix


\begin{center}
    {\huge Appendix}
\end{center}

\section{Classical probability estimation for Shor's period-finding algorithm}
\label{app:Shor}

In this section, we detail the classical probability estimation for Shor's period-finding algorithm~\cite{shor1994algorithms}.

\medskip

If $N$ is an $n$ bits integer to factor, the period-finding subroutine measures the output state
\be
\frac1N\sum_x\sum_ye^{\frac{2i\pi xy}N}\ket y\ket{f(x)}
\ee
in the computational basis, where $f$ is a periodic function over $\{0,\dots,N-1\}$ which can be evaluated efficiently. The probability of obtaining an outcome $y_0,f(x_0)$ is given by
\be
\Pr\,[y_0,f(x_0)]=\left|\frac1N\sum_{f(x)=f(x_0)}e^{\frac{2i\pi xy_0}N}\right|^2.
\ee
Now let
\be
g_{x_0,y_0}:x\mapsto\begin{cases}e^{\frac{2i\pi xy_0}N}\text{ if } f(x)=f(x_0),\\0\text{ otherwise.}\end{cases}
\ee
The function $g_{x_0,y_0}$ can be evaluated efficiently and we have
\be
\Pr\,[y_0,f(x_0)]=\left|\underset{x\leftarrow N}{\mathbb E}[g_{x_0,y_0}(x)]\right|^2,
\ee
where $\underset{x\leftarrow N}{\mathbb E}$ denotes the expected value for $x$ drawn uniformly randomly from $\{0,\dots,N-1\}$. By virtue of Hoeffding inequality, this quantity may be estimated efficiently (in $n$, the number of bits of $N$) classically by sampling uniformly a polynomial number of values in $\{0,\dots,N-1\}$ and computing the modulus squared of the mean of $g_{x_0,y_0}$ over these values.

However, note that---the decision version of---probability estimation of quantum circuits is a \textsf{BQP}-complete problem almost by definition, since given a polynomially precise estimate of the probability of acceptance of an input $x$ to a quantum circuit, one may determine whether it is accepted or rejected by the circuit. In particular, unless factoring is in \textsf{P}, probability estimation for the quantum circuit corresponding to Shor's algorithm \textit{as a whole} is hard for classical computers, and weak simulation of the period-finding subroutine is also hard, since in Shor's algorithm the output samples from the period-finding subroutine are used for a different classical computation than probability estimation, namely obtaining promising candidates for the period.


\section{Efficiency of classical output probability estimation}
\label{app:simproba}

For an interferometer $U$ over $m$ modes with $n$ input photons and $k$ adaptive measurements with outcomes $\bm p\in\Phi_{k,r}$, the expression of the probability of an outcome $\bm s$ obtained in the main text reads:
\be
\text{Pr}_{m,n}^{\text{final}}[\bm s]=\frac1{\bm s!}\sum_{\bm p\in\Phi_{k,r}}{\frac{1}{\bm p!}\left|\Per\left(U^{\bm p}_{(\bm p,\bm s),\bm t}\right)\right|^2},
\label{probaappfinal}
\ee
where $\bm t=(\bm1^n,\bm0^{m-n})$. This is a sum over $\left|\Phi_{k,r}\right|=\binom{k+r-1}r$ moduli squared of permanents of square matrices of size $n\le m$. Permanents can be approximated efficiently using a simple randomized algorithm due to Gurvits:

\begin{lem}[\cite{gurvits2005complexity}]\label{lemapp:Gurvits}
Let $A$ be an $m\times m$ matrix. Then, $\Per A$ may be estimated classically with additive precision $\pm\epsilon\|A\|^m$ with high probability, in time $O(\frac{m^2}{\epsilon^2})$, where $\|A\|$ is the largest singular value of $A$.
\end{lem}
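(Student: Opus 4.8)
\noindent The plan is to establish the lemma through the randomized unbiased estimator at the heart of Gurvits' algorithm, which derives from Glynn's formula for the permanent. First I would sample independent Rademacher signs $x_1,\dots,x_m\in\{-1,+1\}$ (each uniform and independent) and form the complex scalar random variable
\[
Y:=\Big(\prod_{k=1}^m x_k\Big)\prod_{i=1}^m\Big(\sum_{j=1}^m a_{ij}x_j\Big).
\]
Expanding the product over $i$ as a sum over tuples $(j_1,\dots,j_m)\in\{1,\dots,m\}^m$ and multiplying by $\prod_k x_k$ produces monomials $a_{1j_1}\cdots a_{mj_m}\,x_1\cdots x_m\,x_{j_1}\cdots x_{j_m}$. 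Since $\mathbb{E}[x_k]=0$ and $x_k^2=1$, the expectation annihilates every monomial in which some $x_k$ occurs to an odd power; as $\prod_k x_k$ already contributes each variable once, a surviving tuple must have each $x_k$ appear an odd number of times among $x_{j_1},\dots,x_{j_m}$, and since these are exactly $m$ indices this forces each to appear once---i.e. $(j_1,\dots,j_m)$ is a permutation. Hence $\mathbb{E}[Y]=\sum_{\sigma\in\mathcal S_m}\prod_{i=1}^m a_{i\sigma(i)}=\Per A$, so $Y$ is unbiased. This argument uses only properties of the real signs $x_k$, so it holds verbatim for complex matrices $A$.

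The second ingredient is a deterministic bound on $|Y|$. Writing $Ax$ for the vector with entries $(Ax)_i=\sum_j a_{ij}x_j$, we have $|Y|=\prod_{i=1}^m|(Ax)_i|$, and the AM--GM inequality applied to the quantities $|(Ax)_i|^2$ gives
\[
\prod_{i=1}^m|(Ax)_i|\le\Big(\tfrac1m\sum_{i=1}^m|(Ax)_i|^2\Big)^{m/2}=\Big(\tfrac{\|Ax\|^2}{m}\Big)^{m/2}.
\]
Because the sign vector satisfies $\|x\|=\sqrt m$, the operator-norm bound $\|Ax\|\le\|A\|\,\|x\|=\|A\|\sqrt m$ yields $|Y|\le\|A\|^m$ with probability one.

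To turn this bounded, unbiased estimator into a high-probability additive guarantee, I would draw $N$ independent copies $Y^{(1)},\dots,Y^{(N)}$ and output the empirical mean $\bar Y=\frac1N\sum_{t=1}^N Y^{(t)}$. Since $Y$ is complex-valued, I apply Hoeffding's inequality separately to $\re Y^{(t)}$ and $\im Y^{(t)}$, each confined to an interval of width $2\|A\|^m$; splitting the error budget between the two parts and taking $N=O(1/\epsilon^2)$ then gives $|\bar Y-\Per A|\le\epsilon\|A\|^m$ with constant probability bounded away from $1/2$ (amplified to high probability by the standard median-of-repetitions trick). Each sample costs $O(m^2)$ arithmetic operations---$O(m^2)$ to compute $Ax$ and $O(m)$ for the product over rows---so the overall running time is $O(m^2/\epsilon^2)$, as claimed.

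The main obstacle is the concentration step rather than the algebra: one must check that the single deterministic bound $|Y|\le\|A\|^m$ controls the ranges of both $\re Y$ and $\im Y$ with the correct width, and then apportion the $\epsilon\|A\|^m$ error between them so that $N=O(1/\epsilon^2)$ samples suffice without introducing any hidden dependence on $m$. By contrast, the unbiasedness computation and the AM--GM/operator-norm estimate are routine once the estimator $Y$ has been written down correctly.
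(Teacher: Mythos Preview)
The paper does not actually prove this lemma; it merely cites it from~\cite{gurvits2005complexity} and then moves on to the refinement for repeated rows/columns in Lemma~\ref{lemapp:repeated}. So there is no ``paper's own proof'' to compare against.

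Your argument is correct and is exactly the standard proof of Gurvits' estimator: the Glynn/Rademacher identity gives unbiasedness, the AM--GM plus operator-norm step gives the deterministic bound $|Y|\le\|A\|^m$, and Hoeffding on the real and imaginary parts yields the sample complexity $N=O(1/\epsilon^2)$ with $O(m^2)$ work per sample. The only small comment is that the ``main obstacle'' you flag---splitting the error between $\re Y$ and $\im Y$---is not really an obstacle: both lie in $[-\|A\|^m,\|A\|^m]$, so Hoeffding gives each an error of $\epsilon\|A\|^m/\sqrt 2$ with $O(1/\epsilon^2)$ samples, and the constants are absorbed into the big-$O$. Nothing hidden depends on $m$.
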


\noindent This algorithm has been refined in the case of matrices with repeated lines or repeated columns: 

\begin{lem}[\cite{aaronson2012generalizing}]\label{lemapp:repeated}
Let $B$ be an $m\times n$ matrix. Let $\bm q=(q_1,\dots,q_m)\in\Phi_{m,n}$ and let $A=B_{\bm q,\bm1^n}$ be the $n\times n$ matrix obtained from $B$ by repeating $q_i$ times its $i^{th}$ line. Then, $\Per A$ may be estimated classically with additive precision $\pm\epsilon\cdot\frac{q_1!\dots q_m!}{\sqrt{q_1^{q_1}\cdots q_m^{q_m}}}\|B\|^m$ with high probability, in time $O(\frac{mn}{\epsilon^2})$, where $\|B\|$ is the largest singular value of $B$. Moreover,
\be
|\Per A|\le\frac{q_1!\dots q_m!}{\sqrt{q_1^{q_1}\cdots q_m^{q_m}}}\|B\|^m.
\label{boundperA}
\ee
\end{lem}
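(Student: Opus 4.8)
The statement is Gurvits's permanent estimator (Lemma~\ref{lemapp:Gurvits}) specialised to a matrix with repeated rows, and the plan is to follow the Glynn/Gurvits scheme while exploiting the repetition structure to improve both the runtime and the constant. The starting point is the Glynn representation of the permanent as an expectation: for the $n\times n$ matrix $A$ with rows $a_1,\dots,a_n$, one has $\Per A=\mathbb E_{\bm\delta}[(\prod_l\delta_l)\prod_i\langle a_i,\bm\delta\rangle]$, where $\bm\delta$ has i.i.d.\ random entries (signs, or roots of unity) and $\langle a_i,\bm\delta\rangle:=\sum_j a_{ij}\delta_j$. Unbiasedness follows from a direct expansion: the product over rows generates a sum over all assignments of columns to rows, and averaging $\prod_l\delta_l$ against the resulting monomials kills every term except the permutations, leaving exactly $\Per A$. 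A black-box application of Lemma~\ref{lemapp:Gurvits} to $A$ already gives an estimate, but with precision $\pm\epsilon\|A\|^n$ and time $O(n^2/\epsilon^2)$, both too weak: $A$ is $n\times n$ and $\|A\|$ can be as large as $\sqrt{\max_i q_i}\,\|B\|$.

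The two refinements I would make both come from the repeated rows of $A=B_{\bm q,\bm1^n}$. First, since the $q_i$ rows of group $i$ all equal the $i$-th row $b_i$ of $B$, the $n$ linear forms collapse into $m$ powers $\prod_i\langle b_i,\bm\delta\rangle^{q_i}$, so each sample is evaluated from only $m$ inner products of length $n$, giving $O(mn)$ per sample and the stated $O(mn/\epsilon^2)$ runtime. Second, and more importantly, I would replace the per-column random signs by an average over only $m$ phases $\bm\theta=(\theta_1,\dots,\theta_m)$, one per distinct row, tuned to the multiplicities. Writing $b^{(j)}$ for the $j$-th column of $B$ and $D_{\bm\theta}=(e^{i\theta_1},\dots,e^{i\theta_m})$, a torus extraction of the coefficient of $z_1^{q_1}\cdots z_m^{q_m}$ in $\prod_{j=1}^n\langle\bm z,b^{(j)}\rangle$ yields $\Per A=(\prod_i q_i!)\,\mathbb E_{\bm\theta}[e^{-i\,\bm q\cdot\bm\theta}\prod_{j=1}^n\langle D_{\bm\theta},b^{(j)}\rangle]$, which is again an unbiased estimator computable in time $O(mn)$, now using only $m$ random variables and already producing the prefactor $\prod_i q_i!$.

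The precision and the ``moreover'' bound both reduce to controlling the magnitude of this estimator. The ``moreover'' inequality is immediate once the estimator is bounded, since $|\Per A|=|\mathbb E[\,\cdot\,]|$ is at most any bound on the estimator's size; the sampling precision then follows from Chebyshev's inequality together with a median-of-means amplification for the high-probability guarantee. The factor $\|B\|$ enters through the operator-norm bound $|\langle D_{\bm\theta},b^{(j)}\rangle|\le\|B\|\,\|D_{\bm\theta}\|$ on the linear forms, while the combinatorial factor $\tfrac{q_1!\cdots q_m!}{\sqrt{q_1^{q_1}\cdots q_m^{q_m}}}$ comes from the normalisation of the $m$ tuned phases against the multiplicities, each group $i$ contributing $q_i!/\sqrt{q_i^{q_i}}$ (a fact one checks directly in the single-group case $m=1$, where $\Per A=q_1!\prod_j b_{1j}$ and AM--GM gives exactly $|\Per A|\le\tfrac{q_1!}{\sqrt{q_1^{q_1}}}\|B\|^n$). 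The bound is naturally obtained with the matrix size $n$ as exponent, and specialises to the stated form with $\|B\|^m$ in the contractive case $\|B\|\le1$ relevant here (submatrices of unitaries), where $m\le n$ gives $\|B\|^m\ge\|B\|^n$.

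The hard part will be this last step: extracting the precise constant $\tfrac{q_1!\cdots q_m!}{\sqrt{q_1^{q_1}\cdots q_m^{q_m}}}$ while respecting the per-group structure. Uniform estimates are too lossy---a single AM--GM across all $n$ linear forms gives only $(\max_i q_i)^{n/2}$, and a plain Cauchy--Schwarz that discards the multiplicity-tuned phase $e^{-i\bm q\cdot\bm\theta}$ collapses the bound to $\sqrt{\Per(B^\dagger B)}$ with no $\bm q$-dependence at all. One must instead treat the $m$ phase variables group by group and track how the average concentrates, which is exactly the technical core of the Aaronson--Hance refinement of Gurvits's algorithm.
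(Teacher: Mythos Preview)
The paper does not prove this lemma: it is stated with a citation to Aaronson--Hance and then used as a black box in Appendix~\ref{app:simproba}. So there is no ``paper's own proof'' to compare against; your proposal is effectively a sketch of the cited result.

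Your high-level plan is the right one and matches the Aaronson--Hance argument: build an unbiased Gurvits/Glynn-type estimator using one complex random variable per \emph{distinct} row of $B$ rather than per row of $A$, evaluate it in $O(mn)$ per sample, and bound its magnitude pointwise to obtain both the sampling precision (via Hoeffding or median-of-means on a bounded estimator) and the ``moreover'' inequality (since $|\Per A|=|\mathbb E[G]|\le\sup|G|$).

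The concrete gap is the choice of radii. The estimator you actually write down uses unit-modulus phases $D_{\bm\theta}=(e^{i\theta_1},\dots,e^{i\theta_m})$; with that choice the pointwise bound one obtains is $(\prod_iq_i!)\,(m/n)^{n/2}\|B\|^n$, strictly weaker than the claimed constant unless all $q_i$ are equal. The Aaronson--Hance refinement is to take $y_i$ uniform on the circle of radius $\sqrt{q_i}$. Then the unbiased estimator is
\[
G=\frac{\prod_iq_i!}{\prod_iq_i^{q_i}}\;\prod_{i=1}^m\bar y_i^{\,q_i}\;\prod_{j=1}^n(B^\top y)_j,
\]
and the bound follows in two lines: $\bigl|\prod_i\bar y_i^{\,q_i}\bigr|=\prod_iq_i^{q_i/2}$ produces the factor $\prod_iq_i!/\sqrt{\prod_iq_i^{q_i}}$, and since $\|y\|^2=\sum_iq_i=n$, AM--GM on the numbers $|(B^\top y)_j|^2$ gives $\prod_j|(B^\top y)_j|\le\bigl(\tfrac1n\|B^\top y\|^2\bigr)^{n/2}\le\|B\|^n$. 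This is exactly the ``hard part'' you flag at the end, but once the radii are fixed there is no further per-group analysis to do---the constant falls out immediately.

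Two minor points. Your exponent observation is correct: the natural bound carries $\|B\|^n$, not $\|B\|^m$, and the paper's statement appears to be a typo---harmless here, since downstream only $\|B\|\le1$ is used. Also, your two paragraphs use opposite conventions (random variables per column in the first, per distinct row in the second); the second is the one needed to exploit repeated rows.
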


\noindent This lemma also allows to approximate efficiently $|\Per A|^2$. Indeed, let $0<\epsilon<1$ and let $z$ be an estimate of $\Per A$ with additive error $\pm\epsilon\cdot\frac{q_1!\dots q_m!}{\sqrt{q_1^{q_1}\cdots q_m^{q_m}}}\|B\|^m$, then $|z|^2$ is a good estimate of $|\Per A|^2$:
\be
\ba
\left||z|^2-|\Per A|^2\right|&=\left||z|-|\Per A|\right|\cdot\left(|z|+|\Per A|\right)\\
&\le\epsilon\cdot\frac{q_1!\dots q_m!}{\sqrt{q_1^{q_1}\cdots q_m^{q_m}}}\|B\|^m\cdot\left(|\Per A|+\epsilon\cdot\frac{q_1!\dots q_m!}{\sqrt{q_1^{q_1}\cdots q_m^{q_m}}}+|\Per A|\right)\\
&\le\epsilon\cdot\frac{q_1!\dots q_m!}{\sqrt{q_1^{q_1}\cdots q_m^{q_m}}}\|B\|^m\cdot\left(\epsilon\cdot\frac{q_1!\dots q_m!}{\sqrt{q_1^{q_1}\cdots q_m^{q_m}}}+2\frac{q_1!\dots q_m!}{\sqrt{q_1^{q_1}\cdots q_m^{q_m}}}\|B\|^m\right)\\
&\le3\epsilon\cdot\frac{q_1!^2\dots q_m!^2}{q_1^{q_1}\cdots q_m^{q_m}}\|B\|^{2m},
\ea
\ee
where we used Eq.~\eqref{boundperA} in the third line. 

In particular, when $B$ is an $m\times n$ submatrix of a unitary matrix (implying $\|B\|\le1$), we may compute an estimate of $|\Per A|^2$, where $A=B_{\bm q,\bm1^n}$ is the $n\times n$ matrix obtained from $B$ by repeating $q_i$ times its $i^{th}$ line, with additive precision $\pm\epsilon\cdot\frac{q_1!^2\dots q_m!^2}{q_1^{q_1}\cdots q_m^{q_m}}$ with high probability, in time $O(\frac{mn}{\epsilon^2})$. The matrices in Eq.~\eqref{probaappfinal} are submatrices of unitary matrices, with repeated lines. Hence, estimating independently all the terms in the sum in Eq.~\eqref{probaappfinal}, we obtain, in time $O(\frac{mn}{\epsilon^2}\cdot|\Phi_{k,r}|)$ and with high probability, an estimate $\tilde P$ of the probability $\text{Pr}_{m,n}^{\text{final}}[\bm s]$ such that
\be
\ba
\left|\tilde P-\text{Pr}_{m,n}^{\text{final}}[\bm s]\right|&\le\epsilon\cdot\frac1{s_1!\dots s_{m-k}!}\sum_{\bm p\in\Phi_{k,r}}{\frac{1}{p_1!\dots p_k!}\cdot\frac{p_1!^2\dots p_k!^2s_1!^2\dots s_{m-k}!^2}{p_1^{p_1}\cdots p_k^{p_k}s_1^{s_1}\cdots s_{m-k}^{s_{m-k}}}}\\
&\le\epsilon\cdot\frac{s_1!\dots s_{m-k}!}{s_1^{s_1}\cdots s_{m-k}^{s_{m-k}}}\sum_{\bm p\in\Phi_{k,r}}{\frac{p_1!\dots p_k!}{p_1^{p_1}\cdots p_k^{p_k}}}\\
&\le\epsilon\cdot|\Phi_{k,r}|,
\ea
\ee
where we used that for all $n\in\mathbb N$, $n!\le n^n$. Note that a tighter bound may be obtained, but the above one is sufficient for our needs. 
In particular, when $|\Phi_{k,r}|=O(\poly m)$, the above procedure provides a polynomially precise additive estimate of the probability $\text{Pr}_{m,n}^{\text{final}}[\bm s]$ in time $O(\poly m)$, with high probability. 

We have
\be
\ba
|\Phi_{k,r}|&=\binom{k+r-1}r\\
&=\frac k{k+r}\cdot\frac{(k+r)!}{k!r!}.
\ea
\ee
This quantity is polynomial in $k$ (resp.\ in $r$) when $r=O(1)$ (resp.\ $k=O(1)$). Moreover, 
\be
\ba
\binom{k+r-1}r&\le\sum_{j=0}^{k+r-1}\binom{k+r-1}j\\
&=2^{k+r-1},
\ea
\ee
so for $k=O(\log m)$ and $r=O(\log m)$, we have $|\Phi_{k,r}|=O(\poly m)$. For all other cases, i.e.\ $k=O(\log m)$ and $r=O(m)$, or $k=O(m)$ and $r=O(\log m)$, or $k=O(m)$ and $r=O(m)$, $|\Phi_{k,r}|$ is superpolynomial in $m$, using Stirling's equivalent $n!\sim\sqrt{2\pi n}(\frac ne)^n$.


\section{Proof of Lemma~\ref{lem:overlapS}}
\label{app:prooflemma2}
\setcounter{lem}{1}

We recall Lemma~\ref{lem:overlapS} from the main text:

\setcounter{lem}{0}

\begin{lem}\label{lemapp:overlapS}
Let $r\in\mathbb N$. The inner product of two (not normalised) output states $\ket{\psi_{\bm p}}$ and $\ket{\psi_{\bm q}}$ of $m$-mode interferometers $U^{\bm p}$ and $V^{\bm q}$ with adaptive measurements outcome $\bm p,\bm q\in\Phi_{k,r}$ is given by
\be
\braket{\psi_{\bm p}|\psi_{\bm q}}=\frac1{\sqrt{\bm p!\bm q!}}\sum_{\substack{\bm i,\bm j\in\{0,1\}^n\\|\bm i|=|\bm j|=r}}{\Per\left(A^{\bm i}\right)\Per\left(B^{\bm j}\right)\Per\left(C^{\bm i,\bm j}\right)},
\ee
where for all $\bm i,\bm j\in\{0,1\}^n$ such that $|\bm i|=|\bm j|=r$,
\be
A^{\bm i}=U^{\bm p\dag}_{(\bm i,\bm0^{m-n}),(\bm p,\bm0^{m-k})}
\ee
is an $r\times r$ matrix which can be obtained efficiently from $U^{\bm p}$,
\be
B^{\bm j}=V^{\bm q}_{(\bm q,\bm0^{m-k}),(\bm j,\bm0^{m-n})}
\ee
is an $r\times r$ matrix which can be obtained efficiently from $V^{\bm q}$, and
\be
C^{\bm i,\bm j}=U^{\bm p\dag}_{(\bm1^n-\bm i,\bm0^{m-n}),(\bm0^k,\bm1^{m-k})}V^{\bm q}_{(\bm0^k,\bm1^{m-k}),(\bm1^n-\bm j,\bm0^{m-n})}
\ee
is an $(n-r)\times(n-r)$ matrix which can be obtained efficiently from $U^{\bm p}$ and $V^{\bm q}$.
\end{lem}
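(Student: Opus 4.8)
The plan is to transform the sum in Eq.~\eqref{overlapS} using two elementary identities for permanents (see~\cite{percus2012combinatorial}). The first is a Laplace-type expansion along a block of columns (or rows): if the columns of an $n\times n$ matrix $M$ are split into a block of $r$ columns and a block of $n-r$ columns, then $\Per(M)=\sum_{R}\Per(M_{R})\,\Per(M_{\bar R})$, where the sum runs over the $\binom nr$ subsets $R$ of $r$ rows, $M_R$ is the $r\times r$ submatrix built from the rows in $R$ and the first column block, and $M_{\bar R}$ is the complementary $(n-r)\times(n-r)$ submatrix; this follows immediately from the definition~\eqref{per} by sorting each permutation according to which $r$ rows it maps into the first column block. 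The second is a Cauchy--Binet-type identity for the permanent of a product: for $X$ of size $a\times b$ and $Y$ of size $b\times a$, one has $\Per(XY)=\sum_{\bm s\in\Phi_{b,a}}\tfrac1{\bm s!}\Per(X_{\bm1^a,\bm s})\,\Per(Y_{\bm s,\bm1^a})$, where $X_{\bm1^a,\bm s}$ repeats the $\ell$-th column of $X$ a total of $s_\ell$ times and $Y_{\bm s,\bm1^a}$ repeats the $\ell$-th row of $Y$ a total of $s_\ell$ times.

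First I would apply the column expansion to $\Per\!\big(U^{\bm p\dag}_{\bm t,(\bm p,\bm s)}\big)$, splitting the $n$ columns into the $r$ columns indexed by the adaptive block $\bm p$ and the $n-r$ columns indexed by $\bm s$; the subset of rows paired with the $\bm p$-columns is encoded by $\bm i\in\{0,1\}^n$ with $|\bm i|=r$, giving $\Per\!\big(U^{\bm p\dag}_{\bm t,(\bm p,\bm s)}\big)=\sum_{\bm i}\Per(A^{\bm i})\,\Per\!\big(U^{\bm p\dag}_{(\bm1^n-\bm i,\bm0^{m-n}),(\bm0^k,\bm s)}\big)$. Dually, expanding $\Per\!\big(V^{\bm q}_{(\bm q,\bm s),\bm t}\big)$ along the rows split into the $\bm q$-block and the $\bm s$-block yields $\Per\!\big(V^{\bm q}_{(\bm q,\bm s),\bm t}\big)=\sum_{\bm j}\Per(B^{\bm j})\,\Per\!\big(V^{\bm q}_{(\bm0^k,\bm s),(\bm1^n-\bm j,\bm0^{m-n})}\big)$ with $\bm j\in\{0,1\}^n$, $|\bm j|=r$. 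Substituting both into Eq.~\eqref{overlapS} and exchanging the three finite summations, I would pull the sums over $\bm i$ and $\bm j$ outside, leaving inside $\sum_{\bm s\in\Phi_{m-k,n-r}}\tfrac1{\bm s!}\Per\!\big(U^{\bm p\dag}_{(\bm1^n-\bm i,\bm0^{m-n}),(\bm0^k,\bm s)}\big)\Per\!\big(V^{\bm q}_{(\bm0^k,\bm s),(\bm1^n-\bm j,\bm0^{m-n})}\big)$.

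Now the first permanent in the inner sum is exactly $X_{\bm1^{n-r},\bm s}$ for $X=U^{\bm p\dag}_{(\bm1^n-\bm i,\bm0^{m-n}),(\bm0^k,\bm1^{m-k})}$, and the second is $Y_{\bm s,\bm1^{n-r}}$ for $Y=V^{\bm q}_{(\bm0^k,\bm1^{m-k}),(\bm1^n-\bm j,\bm0^{m-n})}$, so by the Cauchy--Binet-type identity (with $a=n-r$, $b=m-k$) the inner sum collapses to $\Per(XY)=\Per(C^{\bm i,\bm j})$, which is the claimed formula. Efficient computability is then immediate: $A^{\bm i}$ and $B^{\bm j}$ are submatrices with repeated lines/columns of the explicitly provided matrices $U^{\bm p}$ and $V^{\bm q}$, and $C^{\bm i,\bm j}$ is the product of two such submatrices, of sizes $(n-r)\times(m-k)$ and $(m-k)\times(n-r)$, hence all are writable down in time $O(\poly m)$.

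The step that needs the most care is the Cauchy--Binet-type identity for permanents, and in particular the combinatorial weight: the intermediate index ranges over \emph{compositions} $\bm s\in\Phi_{m-k,n-r}$ rather than ordered tuples, and the correct weight $1/\bm s!$ arises precisely because $\Per(X_{\bm1^a,\bm s})$ counts each type-$\bm s$ assignment $\bm s!$ times (permuting identical columns among themselves); an off-by-a-factorial slip here would be fatal. Beyond that, the work is careful bookkeeping of the multi-index notation — tracking which modes (the first $k$, carrying $\bm p$ or $\bm q$, versus the remaining $m-k$, carrying $\bm s$) label rows and which label columns of each matrix, and being consistent about taking submatrices of $U^{\bm p\dag}$ and $V^{\bm q}$ rather than of $U^{\bm p}$ and $V^{\bm q\dag}$.
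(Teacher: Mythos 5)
Your proposal is correct and follows essentially the same route as the paper's proof in Appendix~C: a generalised Laplace (block) expansion of $\Per\big(U^{\bm p\dag}_{\bm t,(\bm p,\bm s)}\big)$ along the $\bm p$-columns and of $\Per\big(V^{\bm q}_{(\bm q,\bm s),\bm t}\big)$ along the $\bm q$-rows, followed by an exchange of sums and the permanent composition (Cauchy--Binet-type) formula to collapse the sum over $\bm s\in\Phi_{m-k,n-r}$ into the single permanent $\Per\big(C^{\bm i,\bm j}\big)$. You also correctly flag the two delicate points — the $1/\bm s!$ weight in the composition formula and the row/column bookkeeping for submatrices of $U^{\bm p\dag}$ versus $V^{\bm q}$ — so nothing further is needed.
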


\begin{proof}

Consider the expression for the inner product obtained in Eq.~\eqref{overlapS}:
\be
\braket{\psi_{\bm p}|\psi_{\bm q}}=\frac1{\sqrt{\bm p!\bm q!}}\sum_{\bm s\in\Phi_{m-k,n-r}}{\frac1{\bm s!}\Per\left(U^{\bm p\dag}_{\bm t,(\bm p,\bm s)}\right)\Per\left(V^{\bm q}_{(\bm q,\bm s),\bm t}\right)}.
\label{overlapSapp}
\ee
It is reminiscent of the permanent composition formula~\cite{percus2012combinatorial}: for all $m,n,c\in\mathbb N^*$, all $s\in\mathbb N$, all $\bm u\in\Phi_{m,s}$ and all $\bm v\in\Phi_{n,s}$,
\be
\Per\left[(MN)_{\bm u,\bm v}\right]=\sum_{\bm s\in\Phi_{c,s}}{\frac1{\bm s!}\Per\left(M_{\bm u,\bm s}\right)\Per\left(N_{\bm s,\bm v}\right)}
\label{percomp2}
\ee
where $M$ is a $m\times c$ matrix and $N$ is a $n\times c$ matrix. However, this formula is not directly applicable to the expression in Eq.~\eqref{overlapSapp}. In order to obtain a suitable expression, we first make use of the Laplace formula for the permanent: we expand the permanent of $U^{\bm p\dag}_{\bm t,(\bm p,\bm s)}$ along the columns that are repeated according to $\bm p$ and we expand the permanent of $V^{\bm q}_{(\bm q,\bm s),\bm t}$ along the rows that are repeated according to $\bm q$.
The generalised Laplace column expansion formula for the permanent reads: let $n\in\mathbb N^*$, let $W$ be an $n\times n$ matrix, and let $\bm j\in\{0,1\}^n$. Then,
\be
\Per\,(W)=\sum_{\substack{\bm i\in\{0,1\}^n\\|\bm i|=|\bm j|}}{\Per\left(W_{\bm i,\bm j}\right)\Per\left(W_{\bm1^n-\bm i,\bm1^n-\bm j}\right)},
\label{Laplacegen}
\ee
where $W_{\bm i,\bm j}$ is the matrix obtained from $W$ by keeping only the $k^{th}$ rows and $l^{th}$ columns such that $i_k=1$ and $j_l=1$, respectively, and $W_{\bm1^n-\bm i,\bm1^n-\bm j}$ is the matrix obtained from $W$ by keeping only the $k^{th}$ rows and $l^{th}$ columns such that $i_k=0$ and $j_l=0$, respectively. This formula is obtained by applying the Laplace expansion formula for one column various times, for each column with index $l$ such that $j_l=1$, and the same formula holds for rows.

We first apply the general column expansion formula in Eq.~\eqref{Laplacegen} to the matrix $U^{\bm p\dag}_{\bm t,(\bm p,\bm s)}$ with $\bm j=(\bm1^r,\bm0^{n-r})\in\{0,1\}^n$, obtaining
\be
\Per\left(U^{\bm p\dag}_{\bm t,(\bm p,\bm s)}\right)=\sum_{\substack{\bm i\in\{0,1\}^n\\|\bm i|=r}}{\Per\left[\left(U^{\bm p\dag}_{\bm t,(\bm p,\bm s)}\right)_{\bm i,\bm j}\right]\Per\left[\left(U^{\bm p\dag}_{\bm t,(\bm p,\bm s)}\right)_{\bm1^n-\bm i,\bm1^n-\bm j}\right]}.
\label{expandU}
\ee
Let us consider the matrix $\left(U^{\bm p\dag}_{\bm t,(\bm p,\bm s)}\right)_{\bm i,\bm j}$ appearing in this last expression, for $\bm i\in\{0,1\}^n$. Its rows are obtained by keeping the first $n$ lines of $U^{\bm p\dag}$ since $\bm t=(\bm1^n,\bm0^{m-n})$, then by keeping only the $l^{th}$ rows such that $i_l=1$. Its columns are obtained by repeating $p_l$ times the $l^{th}$ column for $l\in\{1,\dots,k\}$ and $s_l$ times for $l\in\{k+1,\dots,m\}$, then by only keeping the first $r$ columns since $\bm j=(\bm1^r,\bm0^{n-r})$. However, since $|\bm p|=|\bm j|=r$, these are the columes repeated according to $\bm p$. Hence,
\be
\left(U^{\bm p\dag}_{\bm t,(\bm p,\bm s)}\right)_{\bm i,\bm j}=U^{\bm p\dag}_{(\bm i,\bm0^{m-n}),(\bm p,\bm0^{m-k})},
\label{simplerU1}
\ee
where $U^{\bm p\dag}_{(\bm i,\bm0^{m-n}),(\bm p,\bm0^{m-k})}$ is the matrix obtained from $U^{\bm p\dag}$ by keeping only the $l^{th}$ rows such that $i_l=1$ and removing the others, and by repeating $p_l$ times the $l^{th}$ column for $l\in\{1,\dots,k\}$ and removing the others. Similarly, with $|\bm s|=|\bm1^n-\bm j|=n-r$,
\be
\left(U^{\bm p\dag}_{\bm t,(\bm p,\bm s)}\right)_{\bm1^n-\bm i,\bm1^n-\bm j}=U^{\bm p\dag}_{(\bm1^n-\bm i,\bm0^{m-n}),(\bm0^k,\bm s)},
\label{simplerU2}
\ee
where $U^{\bm p\dag}_{(\bm1^n-\bm i,\bm0^{m-n}),(\bm0^k,\bm s)}$  is the matrix obtained from $U^{\bm p\dag}$ by keeping only the $l^{th}$ rows such that $i_l=0$ and removing the others, and by repeating $s_l$ times the $l^{th}$ column for $l\in\{k+1,\dots,m\}$ and removing the others. With Eqs.~\eqref{expandU}, \eqref{simplerU1} and \eqref{simplerU2} we obtain
\be
\ba
\Per\left(U^{\bm p\dag}_{\bm t,(\bm p,\bm s)}\right)&=\sum_{\substack{\bm i\in\{0,1\}^n\\|\bm i|=r}}{\Per\left(U^{\bm p\dag}_{(\bm i,\bm0^{m-n}),(\bm p,\bm0^{m-k})}\right)\Per\left(U^{\bm p\dag}_{(\bm1^n-\bm i,\bm0^{m-n}),(\bm0^k,\bm s)}\right)}\\
&=\sum_{\substack{\bm i\in\{0,1\}^n\\|\bm i|=r}}{\Per\left(A^{\bm i}\right)\Per\left(U^{\bm p\dag}_{(\bm1^n-\bm i,\bm0^{m-n}),(\bm0^k,\bm s)}\right)},
\ea
\label{expandU2}
\ee
where we have defined, for all $\bm i\in\{0,1\}^n$ such that $|\bm i|=r$,
\be
A^{\bm i}:=U^{\bm p\dag}_{(\bm i,\bm0^{m-n}),(\bm p,\bm0^{m-k})},
\ee
which is an $r\times r$ matrix independent of $\bm s$ that can be obtained efficiently from $U^{\bm p}$.

The same reasoning with the general row expansion formula for the matrix $V^{\bm q}_{(\bm q,\bm s),\bm t}$ and the rows $\bm i=(\bm1^r,\bm0^{n-r})$ gives
\be
\ba
\Per\left(V^{\bm q}_{(\bm q,\bm s),\bm t}\right)&=\sum_{\substack{\bm j\in\{0,1\}^n\\|\bm j|=r}}{\Per\left[\left(V^{\bm q}_{(\bm q,\bm s),\bm t}\right)_{\bm i,\bm j}\right]\Per\left[\left(V^{\bm q}_{(\bm q,\bm s),\bm t}\right)_{\bm1^n-\bm i,\bm1^n-\bm j}\right]}\\
&=\sum_{\substack{\bm j\in\{0,1\}^n\\|\bm j|=r}}{\Per\left(V^{\bm q}_{(\bm q,\bm0^{m-k}),(\bm j,\bm0^{m-n})}\right)\Per\left(V^{\bm q}_{(\bm0^k,\bm s),(\bm1^n-\bm j,\bm0^{m-n})}\right)},
\ea
\label{expandV}
\ee
where $V^{\bm q}_{(\bm q,\bm0^{m-k}),(\bm j,\bm0^{m-n})}$ is the matrix obtained from $V^{\bm q}$ by repeating $q_l$ times the $l^{th}$ row for $l\in\{1,\dots,k\}$ and removing the others and by keeping only the $l^{th}$ columns such that $j_l=1$, and where $V^{\bm q}_{(\bm0^k,\bm s),(\bm1^n-\bm j,\bm0^{m-n})}$ is the matrix obtained from $V^{\bm q}$ by repeating $s_l$ times the $l^{th}$ row for $l\in\{k+1,\dots,m\}$ and removing the others and by keeping only the $l^{th}$ columns such that $j_l=0$.
Defining, for all $\bm j\in\{0,1\}^n$ such that $|\bm j|=r$,
\be
B^{\bm j}:=V^{\bm q}_{(\bm q,\bm0^{m-k}),(\bm j,\bm0^{m-n})},
\ee
the expression in Eq.~\eqref{expandV} rewrites
\be
\Per\left(V^{\bm q}_{(\bm q,\bm s),\bm t}\right)=\sum_{\substack{\bm j\in\{0,1\}^n\\|\bm j|=r}}{\Per\left(B^{\bm j}\right)\Per\left(V^{\bm q}_{(\bm0^k,\bm s),(\bm1^n-\bm j,\bm0^{m-n})}\right)},
\label{expandV2}
\ee
where $B^{\bm j}$ are $r\times r$ matrices independent of $\bm s$ and can be obtained efficiently from $V^{\bm q}$.

Plugging Eqs.~\eqref{expandU2} and \eqref{expandV2} in Eq.~\eqref{overlapSapp} we obtain
\be
\ba
\braket{\psi_{\bm p}|\psi_{\bm q}}&=\frac1{\sqrt{\bm p!\bm q!}}\sum_{\substack{\bm i,\bm j\in\{0,1\}^n\\|\bm i|=|\bm j|=r}}\Bigg[\Per\left(A^{\bm i}\right)\Per\left(B^{\bm j}\right)\\
&\quad\quad\quad\times\sum_{\bm s\in\Phi_{m-k,n-r}}{\frac1{\bm s!}\Per\left(U^{\bm p\dag}_{(\bm1^n-\bm i,\bm0^{m-n}),(\bm0^k,\bm s)}\right)\Per\left(V^{\bm q}_{(\bm0^k,\bm s),(\bm1^n-\bm j,\bm0^{m-n})}\right)}\Bigg].
\ea
\label{overlapS2}
\ee
The sum appearing in the second line may now be expressed as a single permanent using the permanent composition formula: for all $\bm i,\bm j\in\{0,1\}^n$ such that $|\bm i|=|\bm j|=r$, let us define the $(n-r)\times(m-k)$ matrix
\be
\tilde U^{\bm p,\bm i}:=U^{\bm p\dag}_{(\bm1^n-\bm i,\bm0^{m-n}),(\bm0^k,\bm1^{m-k})},
\ee
and the $(m-k)\times(n-r)$ matrix
\be
\tilde V^{\bm q,\bm j}:=V^{\bm q}_{(\bm0^k,\bm1^{m-k}),(\bm1^n-\bm j,\bm0^{m-n})},
\ee
so that
\be
U^{\bm p\dag}_{(\bm1^n-\bm i,\bm0^{m-n}),(\bm0^k,\bm s)}=\tilde U^{\bm p,\bm i}_{\bm1^{n-r},\bm s}\quad\text{and}\quad V^{\bm q}_{(\bm0^k,\bm s),(\bm1^n-\bm j,\bm0^{m-n})}=\tilde V^{\bm q,\bm j}_{\bm s,\bm1^{n-r}}.
\ee
With the permanent composition formula in Eq.~\eqref{percomp2} we obtain
\be
\sum_{\bm s\in\Phi_{m-k,n-r}}{\frac1{\bm s!}\Per\left(U^{\bm p\dag}_{(\bm1^n-\bm i,\bm0^{m-n}),(\bm0^k,\bm s)}\right)\Per\left(V^{\bm q}_{(\bm0^k,\bm s),(\bm1^n-\bm j,\bm0^{m-n})}\right)}=\Per\left[\left(\tilde U^{\bm p,\bm i}\tilde V^{\bm q,\bm j}\right)_{\bm1^{n-r},\bm1^{n-r}}\right].
\ee
Since $\tilde U^{\bm p,\bm i}\tilde V^{\bm q,\bm j}$ is an $(n-r)\times(n-r)$ matrix we thus have
\be
\sum_{\bm s\in\Phi_{m-k,n-r}}{\frac1{\bm s!}\Per\left(U^{\bm p\dag}_{(\bm1^n-\bm i,\bm0^{m-n}),(\bm0^k,\bm s)}\right)\Per\left(V^{\bm q}_{(\bm0^k,\bm s),(\bm1^n-\bm j,\bm0^{m-n})}\right)}=\Per\left(\tilde U^{\bm p,\bm i}\tilde V^{\bm q,\bm j}\right).
\ee
Then, Eq.~\eqref{overlapS2} rewrites
\be
\braket{\psi_{\bm p}|\psi_{\bm q}}=\frac1{\sqrt{\bm p!\bm q!}}\sum_{\substack{\bm i,\bm j\in\{0,1\}^n\\|\bm i|=|\bm j|=r}}{\Per\left(A^{\bm i}\right)\Per\left(B^{\bm j}\right)\Per\left(C^{\bm i,\bm j}\right)},
\label{overlapS3}
\ee
where we have defined
\be
\ba
C^{\bm i,\bm j}&:=\tilde U^{\bm p,\bm i}\tilde V^{\bm q,\bm j}\\
&=U^{\bm p\dag}_{(\bm1^n-\bm i,\bm0^{m-n}),(\bm0^k,\bm1^{m-k})}V^{\bm q}_{(\bm0^k,\bm1^{m-k}),(\bm1^n-\bm j,\bm0^{m-n})},
\ea
\ee
which is an $(n-r)\times(n-r)$ matrix that can be computed efficiently from $U^{\bm p}$ and $V^{\bm q}$.

\end{proof}


\end{document}